\pdfoutput=1
\documentclass[11pt]{article}
\usepackage[letterpaper, margin=1in]{geometry}
\usepackage[utf8]{inputenc}
\usepackage[T1]{fontenc}
\usepackage{lmodern}
\usepackage{graphicx}
\usepackage{amsmath}
\usepackage{amssymb}
\usepackage{amsthm}
\usepackage{booktabs}
\usepackage[hidelinks]{hyperref}

\hypersetup{
	pdftitle={Barrier Functions Against Safety Drift in Shared Control},
	pdfauthor={M. Yusuf Uzun, Yildiray Yildiz}
}

\newcommand{\norm}[1]{\lVert#1\rVert}
\newcommand{\tr}{\mathsf{T}}
\newcommand{\Reals}{\mathbb{R}}
\newcommand{\Wo}{W_{\!o}}
\newcommand{\Wod}{W_{\!o}^{\dagger}}

\theoremstyle{plain}
\newtheorem{theorem}{Theorem}
\newtheorem{lemma}{Lemma}
\newtheorem{proposition}{Proposition}
\theoremstyle{definition}
\newtheorem{definition}{Definition}
\newtheorem{assumption}{Assumption}
\theoremstyle{remark}
\newtheorem*{remark}{Remark}

\title{Barrier Functions Against Safety Drift in Shared Control}
\author{
	M. Yusuf Uzun\textsuperscript{1} \qquad Yildiray Yildiz\textsuperscript{2} \\[6pt]
	\small \textsuperscript{1}Department of Mechanical, Materials, and Aerospace Engineering \\
	\small Illinois Institute of Technology, Chicago, IL 60616, USA \\
	\small \texttt{muzun@hawk.illinoistech.edu} \\[4pt]
	\small \textsuperscript{2}Department of Mechanical Engineering \\
	\small Bilkent University, 06800 Ankara, T\"{u}rkiye \\
	\small \texttt{yyildiz@bilkent.edu.tr}
}
\date{}

\begin{document}

\maketitle

\begin{abstract}
	Shared-control arbitration mechanisms can fail structurally when they are driven by quantities that evolve on the same automation-assisted trajectory that unsafe automation can corrupt. Under unsafe assistance, arising from faults or flawed design, the arbitration logic can adapt to the corrupted trajectory instead of resisting it. Focusing on workload regulation, this paper proposes a framework that prevents this problem by anchoring the safety constraints to an unassisted pilot trajectory. A baseline model evolving without automation assistance generates the acceptable workload envelope, while an assistance-induced deviation barrier and a closed-form mismatch bound relate actual and baseline workload evolution. The resulting control barrier function based quadratic program is always feasible and guarantees forward invariance of an augmented safe set. Simulations on an F-16 pitch-tracking problem show that the proposed method prevents workload-constraint drift under unsafe assistance.
\end{abstract}

\section{Introduction}
\label{sec:intro}

The human--automation interaction literature has long shown that a central challenge for a harmonious cooperation is inappropriate feedback and interaction: automatic systems can compensate for developing abnormalities while providing too little information for timely diagnosis and recovery~\cite{norman1990problem}. Woods and Sarter similarly described ``going sour'' accidents as coordination breakdowns that reveal deeper patterns in human--machine cooperation rather than isolated glitches or simple operator mistakes~\cite{woods2000going}. The Boeing 737 MAX Maneuvering Characteristics Augmentation System failures provide a vivid example of such escalation under unsafe human--automation interaction, where a static assignment of control authority, combined with faulty sensor data, leads to repeated undesired autonomous intervention with catastrophic consequences~\cite{curran2024mcas}.

One important response to these challenges is better arbitration. In shared control, arbitration regulates how authority is balanced between human and automation, often by adjusting assistance online according to operator-centred quantities such as operator state~\cite{sentouh2018driver, benloucif2019online, nguyen2018sensor}, trust~\cite{hu2021trust}, or cooperation state (e.g., agreement, conflict, or negotiation between human and automation)~\cite{wang2020decision, uzun2024robust}. These schemes are attractive because they adapt assistance to the condition of the human--automation team rather than relying on a fixed blending law. However, they typically rely on hand-tuned, parameter-dependent blending functions tailored to specific system architectures. A recent control barrier function (CBF) based work~\cite{uzun2025arbitration} has introduced an alternative: by expressing shared-control metrics such as workload as real-time inequality constraints in a quadratic program (QP), the arbitration becomes systematic (no parameter-dependent blending functions) and modular (new metrics can be added as additional barriers without changing the arbitration structure), and feasibility is guaranteed because disengaging automation always satisfies all constraints, preserving pilot authority by construction. However, in both traditional and CBF-based schemes, the regulating quantity is at least partly inferred from the ongoing assisted interaction. Under nominal conditions this is effective, but it also creates a vulnerability. In case of persistent unsafe assistance, automation affects not only the system state but also the very quantities used to decide how much assistance should be allowed. The mechanism can therefore create a vicious cycle: unsafe assistance alters the trajectory, the altered trajectory changes the arbitration variables, and the updated arbitration logic can in turn continue to admit the unsafe assistance. In this sense, the regulating constraints themselves can ``go sour''~\cite{woods2000going}.

This paper addresses the problem of unsafe assistance-induced instability in shared control systems discussed above, in the context of workload regulation. Following the arbitration control barrier function (ACBF) framework of~\cite{uzun2025arbitration}, we decouple the safety envelope from the assistance it regulates by constructing an operator-only baseline model, namely a copy of the coupled human--plant dynamics evolving under zero automation assistance, and generate the adaptive workload envelope from this baseline trajectory rather than the actual system. An assistance-induced deviation barrier confines the actual trajectory near the baseline, and a computable mismatch bound relates actual and baseline workload evolution. The resulting ACBF-based safety filter is always feasible because automation disengagement is admissible by construction, and makes the augmented safe set forward invariant. The method is evaluated on an F-16 pitch-tracking task under four severe unsafe assistance scenarios.

\section{Framework}
\label{sec:framework}

\subsection{Coupled Human--Plant Dynamics}

We consider a control-affine plant
\begin{equation} \label{eq:pdyn}
	\dot{x}_p = f_p(x_p) + g_p(x_p)\,u_p,
\end{equation}
where $x_p \in \Reals^{n_p}$, $u_p \in \Reals^{q_p}$, and $f_p$, $g_p$ are locally Lipschitz. The plant input is the sum of human and automation commands,
\begin{equation}\label{eq:up}
	u_p = y_h + y_a,
\end{equation}
where $y_h,\, y_a \in \Reals^{q_p}$ denote the human operator command and the automation assistance. The human operator is modelled as
\begin{subequations}\label{eq:hdyn}
	\begin{align}
		\dot{x}_h &= f_h(x_h,x_p) + g_h(x_h,x_p)\, r, \\
		y_h &= h_h(x_h) + j_h(x_h)\, r,
	\end{align}
\end{subequations}
where $x_h \in \Reals^{n_h}$ is the human state, $r \in \Reals^{q_h}$ is the reference signal with $q_h$ the reference dimension, and $f_h$, $g_h$, $h_h$, $j_h$ are locally Lipschitz mappings of appropriate dimensions. Using~\eqref{eq:pdyn}--\eqref{eq:hdyn}, the aggregated state $x = [x_p^\tr \;\; x_h^\tr]^\tr \in \Reals^{n}$, $n = n_p + n_h$, evolves as
\begin{equation} \label{eq:cphsdyn}
	\dot{x} = f(x) + g_a(x)\,y_a + g_r(x)\,r,
\end{equation}
where $f$, $g_a$, and $g_r$ collect the unforced, automation-input, and reference-input contributions identified from \eqref{eq:pdyn}~--~\eqref{eq:hdyn}. The reference $r$ and its derivatives up to order $\nu \in \mathbb{N}$ are assumed available, with $r \in C^\nu$. The required derivative order $\nu$ will be determined by the safety constraints introduced in Section~\ref{sec:design}.

\subsection[Arbitration Control Barrier Functions (ACBFs) and Safety-Filter Quadratic Program (QP)]{Arbitration Control Barrier Functions (ACBFs) and\\ Safety-Filter Quadratic Program (QP)}

A barrier function is a scalar function whose sign encodes whether the system is inside a desired safe region. To enforce safety, one must monitor how this function evolves in time. Let $\rho \triangleq (r, \dot{r}, \ldots, r^{(\nu)})$ denote the reference signal and its derivatives introduced in Section~\ref{sec:framework}, and let $b(x,\rho) : \Reals^n \times \Reals^{(\nu+1)q_h} \to \Reals$ be a continuously differentiable candidate barrier function. We denote its time derivative along the no-assistance dynamics ($y_a = 0$) and along the assisted dynamics, respectively, as
\begin{subequations}\label{eq:derivatives}
	\begin{align}
		\mathcal{L}_0 b(x,\rho) &:= L_f b + L_{g_r} b\, r + \textstyle\sum_{i=0}^{\nu-1} \frac{\partial b}{\partial r^{(i)}}\, r^{(i+1)}, \label{eq:L0}\\
		\mathcal{L}_a b(x,\rho;y_a) &:= \mathcal{L}_0 b(x,\rho) + L_{g_a} b(x,\rho)\, y_a, \label{eq:La}
	\end{align}
\end{subequations}
where $L_f b \triangleq \frac{\partial b}{\partial x} f(x)$, $L_{g_r} b \triangleq \frac{\partial b}{\partial x} g_r(x)$, $L_{g_a} b \triangleq \frac{\partial b}{\partial x} g_a(x)$ are Lie derivatives with respect to the unforced dynamics $f$, reference-input $g_r$, and automation-input $g_a$ from~\eqref{eq:cphsdyn}. The operator $\mathcal{L}_0 b$ captures the barrier's unforced dynamics under pilot-only dynamics, while $\mathcal{L}_a b$ adds the automation's contribution through $L_{g_a} b\, y_a$.

\begin{definition}[ACBF~\cite{uzun2025arbitration}]\label{def:acbf}
	Let $C(\rho) = \{x \in \Reals^n \mid b(x,\rho) \ge 0\}$. The function~$b$ is an \emph{arbitration control barrier function} (ACBF) for~\eqref{eq:cphsdyn} if there exists a class-$\mathcal{K}$ function $\alpha$ such that
	\begin{equation}\label{eq:acbf}
		\mathcal{L}_0 b(x,\rho) + \alpha\!\left(b(x,\rho)\right) \ge 0,
		\quad \forall\, x \in C(\rho).
	\end{equation}
	The ACBF condition is evaluated at $y_a = 0$, ensuring that automation disengagement is always feasible.
\end{definition}

Given ACBFs $b_1,\ldots,b_m$ with the safe set $C(\rho) = \bigcap_{i=1}^{m}\{x : b_i(x,\rho) \ge 0\}$, the filtered automation assistance is computed as
\begin{align}\label{eq:qp}
	y_a^\star(t)
	&= \arg\min_{y_a \in Y_a}\; \norm{y_a - y_{a,\mathrm{nom}}(t)}^2 \\
	&\quad \text{s.t.}\;\;
	\mathcal{L}_a b_i(x,\rho;y_a) + \alpha_i(b_i(x,\rho)) \ge 0,\;\; i=1,\ldots,m, \nonumber
\end{align}
where $y_{a,\mathrm{nom}}(t)$ is the unfiltered (pre-safety-filter) output of the automation assistance system and $Y_a \subseteq \Reals^{q_p}$ is the admissible assistance set containing the origin. The term $\mathcal{L}_a b_i$ in each constraint is defined in~\eqref{eq:La}, and since $L_{g_a} b_i\, y_a$ is affine in $y_a$, \eqref{eq:qp} is a convex QP. Feasibility is guaranteed because setting $y_a = 0$ (i.e., disengaging automation) always satisfies all constraints by the ACBF condition~\eqref{eq:acbf}.

\section[Pilot-Anchored Arbitration Control Barrier Function (ACBF) Design]{Pilot-Anchored Arbitration Control\\ Barrier Function (ACBF) Design}
\label{sec:design}

\subsection{Linear Shared-Control Dynamics}

We now specialize the general nonlinear framework of Section~\ref{sec:framework} to linear plants and a linear pilot model. The linear specialization below enables closed-form workload barriers and a computable mismatch bound.

Consider a linear plant with state $x_p \in \Reals^{n_p}$. The plant input combines the external command $u_p = y_h + y_a$ as in~\eqref{eq:up} with state feedback through a gain $K_p \in \Reals^{1 \times n_p}$, yielding the closed-loop dynamics
\begin{equation}\label{eq:plant_lin}
	\dot{x}_p = A_p\,x_p + B_p\,u_p, \quad y_p = C_p\,x_p,
\end{equation}
where $u_p$ is the total plant input, $A_p = A_p^{\mathrm{ol}} - B_p K_p \in \Reals^{n_p \times n_p}$ is the stable closed-loop plant matrix with $A_p^{\mathrm{ol}} \in \Reals^{n_p \times n_p}$ being the open-loop plant matrix, $B_p \in \Reals^{n_p \times 1}$ is the input matrix, and $C_p \in \Reals^{1 \times n_p}$ is the output matrix. The human operator model's transfer function relating the pilot output to the tracking error is \cite{bacon1983optimal}
\begin{equation}\label{eq:humantf}
	\frac{Y_h(s)}{R(s) - Y_p(s)} = k_p\,\frac{T_p s + 1}{T_z s + 1},
\end{equation}
where $Y_h(s)$, $R(s)$, and $Y_p(s)$ are human input, reference signal, and plant output, $k_p$ is the pilot's proportional gain, $T_p$ is the lead time constant (reflecting anticipatory response), and $T_z$ is the lag time constant (reflecting neuromuscular delay). A state-space realization of \eqref{eq:humantf} gives
\begin{subequations}\label{eq:pilot_ss}
	\begin{align}
		\dot{x}_h &= a_h\, x_h + b_h(r - y_p), \label{eq:pilot_ss_a}\\
		y_h &= c_h\, x_h + d_h(r - y_p), \label{eq:pilot_ss_b}
	\end{align}
\end{subequations}
where $x_h \in \Reals$ is the pilot state, $y_p = C_p x_p$ is the plant output from~\eqref{eq:plant_lin}, and $(a_h, b_h, c_h, d_h)$ are the scalar parameters.

Collecting the plant dynamics~\eqref{eq:plant_lin} and the pilot model~\eqref{eq:pilot_ss} into a single state vector $x = [x_p^\tr \;\; x_h]^\tr \in \Reals^{n}$, $n = n_p + 1$, gives the aggregated linear system
\begin{equation}\label{eq:lindyn}
	\dot{x} = A\,x + B_{y_a}\,y_a + B_r\,r,
\end{equation}
where $B_{y_a} \in \Reals^{n\times 1}$ and $B_r \in \Reals^{n\times 1}$ are the automation and reference input matrices, and the aggregated system matrix is
\begin{equation}\label{eq:Amat}
	A = \begin{bmatrix} A_p^{\mathrm{ol}} - B_p K_p - B_p d_h C_p & B_p c_h \\ -b_h C_p & a_h \end{bmatrix} \in \Reals^{n \times n}.
\end{equation}
Considering a well-trained pilot, $A$ in~\eqref{eq:Amat} is assumed Hurwitz. The pilot workload is defined as
\begin{equation}\label{eq:wp}
	w_p(x,\rho) \triangleq y_h^2 + \dot{y}_h^2,
\end{equation}
where $y_h$ is given in~\eqref{eq:pilot_ss_b} and $\dot{y}_h$ is its rate.

\subsection{Baseline Model and Workload Barriers}
\label{sec:predictor}

To decouple the workload envelope from the automation output, we introduce a pilot-only baseline state $\bar{x} \in \Reals^{n}$, which evolves according to~\eqref{eq:lindyn} but with no assistance as
\begin{equation}\label{eq:predictor}
	\dot{\bar{x}} = A\,\bar{x} + B_r\,r, \quad \bar{x}(t_0) = x(t_0),
\end{equation}
where $t_0 \ge 0$ is the time at which both the assistance system and the safety filter are engaged (not in \eqref{eq:predictor} but in the closed loop system). Since no automation assistance is applied for $t < t_0$ ($y_a=0$), the state $x(t_0)$ is the pilot-only state at the activation instant, and we initialize the baseline model with $\bar{x}(t_0)=x(t_0)$. Subtracting~\eqref{eq:predictor} from~\eqref{eq:lindyn}, the automation-induced deviation $e \triangleq x - \bar{x}$ satisfies
\begin{equation}\label{eq:error}
	\dot{e} = A\,e + B_{y_a}\,y_a.
\end{equation}

The safety filter in~\eqref{eq:qp} uses two \emph{workload barriers}, $b_1$ and $b_2$, to confine the pilot workload $w_p$ in~\eqref{eq:wp} between time-varying envelope bounds $w_L(t)$ and $w_U(t)$ as
\begin{subequations}\label{eq:b12}
	\begin{align}
		b_1(x,w_L,\rho) &= w_p(x,\rho) - w_L(t), \\
		b_2(x,w_U,\rho) &= w_U(t) - w_p(x,\rho),
	\end{align}
\end{subequations}
so that $b_1 \ge 0$ and $b_2 \ge 0$ enforce $w_L \le w_p \le w_U$. A third barrier, introduced in Section~\ref{sec:devbarrier}, limits the deviation~\eqref{eq:error} between the actual and baseline states. The envelope dynamics governing $w_L$ and $w_U$ are constructed in Section~\ref{sec:envelope}.

\subsection{Deviation Barrier}
\label{sec:devbarrier}

Using~\eqref{eq:pilot_ss_b}, $y_h$ can be written as $y_h = Hx + d_h r$ by defining
\begin{equation}
	H \triangleq [-d_h C_p,\; c_h] \in \Reals^{1 \times n}. \label{eq:H_def}
\end{equation}

\begin{assumption}\label{ass:reldeg}
	The plant~\eqref{eq:plant_lin} satisfies $C_p B_p = 0$ and $C_p A_p B_p \neq 0$, i.e., the output $y_p$ has relative degree two with respect to the input $u_p$.
\end{assumption}

\begin{remark}
	In pitch-angle control where the elevator is the control input, the output has relative degree two.
\end{remark}

Differentiating $y_h = Hx + d_h r$ along~\eqref{eq:lindyn} gives $\dot{y}_h = H(Ax + B_{y_a} y_a + B_r r) + d_h \dot{r}$. If $H B_{y_a} \ne 0$, then $\dot{y}_h$ depends on $y_a$, and so does $w_p$ in~\eqref{eq:wp}. This is undesirable because the safety filter in~\eqref{eq:qp} computes $y_a$ based on $w_p$ (see~\eqref{eq:b12}). If $w_p$ changes with $y_a$, then the safety-filter selects the assistance input $y_a$ using a workload measure that already depends algebraically on that same input. Barriers in~\eqref{eq:b12} therefore no longer provide fixed inequalities for the optimization over $y_a$, but instead depend on the very variable being solved for.

Since $B_{y_a}$ in~\eqref{eq:lindyn} has the form $B_{y_a} = [B_p^\tr,\; 0]^\tr$, using~\eqref{eq:H_def} and Assumption~\ref{ass:reldeg}, it is obtained that
\begin{equation}\label{eq:h0bya}
	H B_{y_a} = [-d_h C_p,\; c_h]\begin{bmatrix} B_p \\ 0 \end{bmatrix} = -d_h C_p B_p = 0.
\end{equation}

Differentiating $\dot{y}_h = H(Ax + B_{y_a} y_a + B_r r) + d_h \dot{r}= HAx + HB_r r + d_h \dot{r}$ and substituting~\eqref{eq:lindyn} with $y_a = 0$ gives $\ddot{y}_h\big|_{y_a=0} = HA(Ax + B_r r) + HB_r \dot{r} + d_h \ddot{r}$. To collect the quantities that appear in the workload definition~\eqref{eq:wp}, its no-assistance evolution under~\eqref{eq:lindyn}, and the subsequent QP construction in~\eqref{eq:qp}, we define three quantities as
\begin{subequations}\label{eq:outputs}
	\begin{align}
		y_0 &\triangleq y_h = h_0^\tr x + d_h\, r, \label{eq:y0} \\
		y_1 &\triangleq \dot{y}_h = h_1^\tr x + H B_r\, r + d_h\, \dot{r}, \label{eq:y1}\\
		y_2 &\triangleq \ddot{y}_h\big|_{y_a=0} = h_2^\tr x + H\!A B_r\, r + H B_r\, \dot{r} + d_h\, \ddot{r}, \label{eq:y2}
	\end{align}
\end{subequations}
where \(h_k \triangleq (H A^k)^\tr \in \Reals^n\) for \(k=0,1,2\), with $A$ in~\eqref{eq:Amat}. We define $S \triangleq [h_0 \;\; h_1 \;\; h_2]^\tr \in \Reals^{3 \times n}$ so that \(S\) maps the state \(x\) to the state-dependent part of these quantities as
\begin{equation}\label{eq:y_stack}
	\begin{bmatrix}
		y_0 \\ y_1 \\ y_2
	\end{bmatrix}
	=
	Sx +
	\begin{bmatrix}
		d_h\, r \\
		H B_r\, r + d_h\, \dot{r} \\
		H A B_r\, r + H B_r\, \dot{r} + d_h\, \ddot{r}
	\end{bmatrix}.
\end{equation}
In this notation, the workload~\eqref{eq:wp} becomes \(w_p = y_0^2 + y_1^2\).

Using~\eqref{eq:outputs} and differentiating $w_p = y_0^2 + y_1^2$ with respect to~$x$ gives $\frac{\partial w_p}{\partial x} = 2\,y_0\,h_0^\tr + 2\,y_1\,h_1^\tr$. Therefore,
\begin{equation}\label{eq:Lgawp_raw}
	L_{g_a} w_p = \frac{\partial w_p}{\partial x} B_{y_a} = 2\,y_0(h_0^\tr B_{y_a}) + 2\,y_1(h_1^\tr B_{y_a}),
\end{equation}
where $B_{y_a}$ is given in~\eqref{eq:lindyn}.

Substituting $h_0^\tr B_{y_a} = H B_{y_a} = 0$ (see~\eqref{eq:h0bya}) into~\eqref{eq:Lgawp_raw}, and differentiating $w_p = y_0^2 + y_1^2$ along the no-assistance dynamics ($y_a = 0$) of~\eqref{eq:lindyn} using~\eqref{eq:outputs}, the Lie derivatives of $w_p$ simplify to
\begin{subequations}\label{eq:wp_derivs}
	\begin{align}
		\mathcal{L}_0 w_p &= 2\,y_0\,y_1 + 2\,y_1\,y_2, \label{eq:L0wp}\\
		L_{g_a} w_p &= 2\,y_1\,(H A B_{y_a}), \label{eq:Lgawp}
	\end{align}
\end{subequations}
where $\mathcal{L}_0 w_p$ is the time derivative of $w_p$ under no assistance ($y_a = 0$) as defined in~\eqref{eq:L0}, $L_{g_a} w_p = \frac{\partial w_p}{\partial x} B_{y_a}$ is the control Lie derivative from~\eqref{eq:La}, $y_0, y_1, y_2$ are defined in~\eqref{eq:outputs}, and $H A B_{y_a} = h_1^\tr B_{y_a}$ with $h_1 = (HA)^\tr$.

We now construct a deviation barrier that limits how far the actual state $x$ in~\eqref{eq:lindyn} can deviate from the baseline state $\bar{x}$ in~\eqref{eq:predictor}. Let $\Wo$ be the solution of the Lyapunov equation
\begin{equation}\label{eq:lyap}
	A^\tr \Wo + \Wo A = -S^\tr S,
\end{equation}
where $A$ is the aggregated system matrix given in~\eqref{eq:Amat} and $S$ is from~\eqref{eq:y_stack}. Since $A$ is Hurwitz, $\Wo \ge 0$. The deviation barrier is defined as
\begin{equation}\label{eq:b3}
	b_3(x,\bar{x}) = \eta^2 - e^\tr \Wo\, e, \quad e = x - \bar{x},
\end{equation}
where $e$ is the deviation defined in~\eqref{eq:error} and $\eta > 0$ is a design parameter. The constraint $b_3 \ge 0$ confines the deviation to the set $C_3 \triangleq \{(x,\bar{x}) : e^\tr \Wo\, e \le \eta^2\}$, an ellipsoidal tube around the baseline trajectory whose size is controlled by $\eta$.

\begin{lemma}\label{lem:b3}
	The function $b_3$ in~\eqref{eq:b3} is an ACBF (see Definition~\ref{def:acbf}) for the coupled dynamics~\eqref{eq:lindyn}, with the class-$\mathcal{K}$ function $\alpha_3(s) = \gamma_3 s$, where $\gamma_3 > 0$ is a design parameter.
\end{lemma}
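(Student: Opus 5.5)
The plan is to verify the ACBF inequality~\eqref{eq:acbf} directly, computing $\mathcal{L}_0 b_3$ along the no-assistance dynamics. The barrier $b_3$ depends on the augmented state $(x,\bar{x})$, and the baseline $\bar{x}$ evolves by~\eqref{eq:predictor}. The time derivative of $b_3$ therefore only involves the deviation $e = x - \bar{x}$, whose dynamics are given by~\eqref{eq:error}. The reference enters $\dot{x}$ and $\dot{\bar{x}}$ identically through $B_r r$ and cancels in $\dot{e}$. Because $b_3$ has no explicit dependence on $\rho$, the sum over $\partial b_3/\partial r^{(i)}$ in~\eqref{eq:L0} vanishes. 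Setting $y_a = 0$ in~\eqref{eq:error} gives $\dot{e} = Ae$.

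Differentiating $b_3 = \eta^2 - e^\tr \Wo e$ along $\dot e = Ae$ and using the Lyapunov equation~\eqref{eq:lyap} gives
\begin{equation*}
	\mathcal{L}_0 b_3 = -e^\tr\left(A^\tr \Wo + \Wo A\right)e = e^\tr S^\tr S\, e = \norm{S e}^2 \ge 0 .
\end{equation*}
For comparison, the automation term is $L_{g_a} b_3\, y_a = -2 e^\tr \Wo B_{y_a} y_a$. This term is affine in $y_a$, consistent with the convex QP~\eqref{eq:qp}, but it plays no role in the ACBF condition, which is evaluated at $y_a = 0$.

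With $\alpha_3(s) = \gamma_3 s$ and $\gamma_3 > 0$, every $(x,\bar{x}) \in C_3$ satisfies $b_3 \ge 0$. Hence
\begin{equation*}
	\mathcal{L}_0 b_3 + \alpha_3(b_3) = \norm{Se}^2 + \gamma_3 b_3 \ge 0
\end{equation*}
on $C_3$, which is exactly~\eqref{eq:acbf}. In fact the inequality holds for any class-$\mathcal{K}$ function, and the linear choice is made only for convenience. The remaining requirements of Definition~\ref{def:acbf} are immediate. Since $b_3$ is a quadratic polynomial in $(x,\bar{x})$, it is continuously differentiable. The set $C_3$ is nonempty because it contains $e=0$, and the initialization $\bar{x}(t_0)=x(t_0)$ gives $b_3 = \eta^2 > 0$ at activation.

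The only delicate point is interpreting Definition~\ref{def:acbf} on the augmented state $(x,\bar{x})$ rather than on $x$ alone. Definition~\ref{def:acbf} is stated for~\eqref{eq:cphsdyn}, so I will note that appending~\eqref{eq:predictor} produces a system of the same control-affine form. In that augmented system, $\bar{x}$ is driven by $r$ and does not depend on $y_a$. The same Lie-derivative definitions~\eqref{eq:derivatives} then apply to the stacked dynamics.

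The solvability of~\eqref{eq:lyap} with $\Wo = \int_0^\infty e^{A^\tr t} S^\tr S e^{At}\,dt \succeq 0$ follows from $A$ being Hurwitz, as already stated. The sign-definiteness of $\Wo$ is not actually needed for the inequality itself. It matters only for giving $C_3$ its tube interpretation.
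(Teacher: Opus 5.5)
Your proof is correct and follows the paper's argument: you reduce to $\dot e = Ae$ at $y_a = 0$, use the Lyapunov equation~\eqref{eq:lyap} to get $\mathcal{L}_0 b_3 = \norm{Se}^2 \ge 0$, and conclude $\mathcal{L}_0 b_3 + \gamma_3 b_3 \ge 0$ on $C_3$. Your extra remarks are accurate clarifications that the paper leaves implicit: reading the definition on the stacked state $(x,\bar{x})$, the cancellation of $B_r r$ in $\dot e$, and the fact that any class-$\mathcal{K}$ function would work.
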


\begin{proof}
	Under $y_a = 0$, the deviation dynamics~\eqref{eq:error} reduce to $\dot{e} = Ae$. Using~\eqref{eq:lyap} and differentiating $b_3$ from~\eqref{eq:b3} along $\dot{e} = Ae$ gives
	\begin{align}
		\mathcal{L}_0 b_3 &= -2\,e^\tr \Wo\,(Ae)
		= e^\tr(-\Wo A - A^\tr \Wo)\,e \notag\\
		&= e^\tr S^\tr S\, e. \label{eq:L0b3}
	\end{align}
	Furthermore,
	\begin{equation}\label{eq:acbf3_check}
		\mathcal{L}_0 b_3 + \gamma_3\, b_3 = \underbrace{e^\tr S^\tr S\, e}_{\ge\, 0} + \gamma_3\underbrace{(\eta^2 - e^\tr \Wo\, e)}_{\ge\, 0 \text{ on } C_3} \ge 0.
	\end{equation}
	The first term is non-negative because $S^\tr S \ge 0$, and the second is non-negative on $C_3$ by definition of $C_3 = \{(x,\bar{x}) : e^\tr \Wo\, e \le \eta^2\}$. Thus, \eqref{eq:acbf3_check} is of the form~\eqref{eq:acbf} with $\alpha_3(s)=\gamma_3 s$.
\end{proof}

\begin{lemma}\label{lem:range}
	Let $\Wo$ be the solution of~\eqref{eq:lyap}. Then $\mathrm{range}(S^\tr) \subseteq \mathrm{range}(\Wo)$.
\end{lemma}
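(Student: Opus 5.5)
Since $\Wo$ is symmetric, $\mathrm{range}(\Wo) = \ker(\Wo)^\perp$, and $\mathrm{range}(S^\tr) = \ker(S)^\perp$. The inclusion $\mathrm{range}(S^\tr) \subseteq \mathrm{range}(\Wo)$ is therefore equivalent to $\ker(\Wo) \subseteq \ker(S)$. We will prove this kernel inclusion directly.

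First we need the integral representation of $\Wo$. Because $A$ in~\eqref{eq:Amat} is Hurwitz, the Lyapunov equation~\eqref{eq:lyap} has the unique solution
\begin{equation*}
	\Wo = \int_0^\infty e^{A^\tr t} S^\tr S\, e^{A t}\, dt .
\end{equation*}
This solution is symmetric and positive semidefinite, consistent with the claim $\Wo \ge 0$ made before~\eqref{eq:b3}. Uniqueness follows from the standard Sylvester-equation argument: no eigenvalue sum $\lambda_i + \lambda_j$ of $A$ vanishes, so the linear map $W \mapsto A^\tr W + W A$ is injective.

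Now take $v \in \ker(\Wo)$. Then
\begin{equation*}
	0 = v^\tr \Wo v = \int_0^\infty \norm{S e^{At} v}^2 dt .
\end{equation*}
The integrand is continuous and nonnegative, so it vanishes for every $t \ge 0$, i.e., $S e^{At} v \equiv 0$. Evaluating at $t=0$ gives $Sv = 0$. Hence $\ker(\Wo) \subseteq \ker(S)$, and taking orthogonal complements yields $\mathrm{range}(S^\tr) \subseteq \mathrm{range}(\Wo)$.

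An alternative route avoids the integral and uses the Lyapunov equation directly. If $\Wo v = 0$, then
\begin{equation*}
	v^\tr(A^\tr \Wo + \Wo A)v = 0 ,
\end{equation*}
since both terms contain $\Wo v$. By~\eqref{eq:lyap} this gives $-\norm{Sv}^2 = 0$, so $Sv = 0$. This version uses only symmetry of $\Wo$ and no semidefiniteness, and it is the one we will present.

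The only delicate point is justifying the identity $\mathrm{range}(\Wo) = \ker(\Wo)^\perp$. This requires $\Wo$ to be symmetric, which holds because the transpose of~\eqref{eq:lyap} is again~\eqref{eq:lyap}, and the solution is unique.
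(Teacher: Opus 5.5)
Your proposal is correct, and the route you choose to present matches the paper's proof: take $v\in\ker(\Wo)$, use symmetry of $\Wo$ (from uniqueness of the Lyapunov solution) to get $v^\tr(A^\tr\Wo+\Wo A)v=0$, hence $\|Sv\|^2=0$, then pass to orthogonal complements using $\mathrm{range}(\Wo)=\ker(\Wo)^\perp$. The Gramian-integral argument you sketch first is also valid, but it is not needed.
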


\begin{proof}
	Since $A$ is Hurwitz, \eqref{eq:lyap} has a unique solution $\Wo$, and because the right-hand side $S^\tr S$ is symmetric, this solution satisfies $\Wo=\Wo^\tr$. Let $z \in \ker(\Wo)$. Using~\eqref{eq:lyap},
	\begin{equation}
		z^\tr S^\tr S z
		= -z^\tr (A^\tr \Wo + \Wo A) z
		= 0,
	\end{equation}
	where the last equality follows from $\Wo z = 0$ and the symmetry $\Wo=\Wo^\tr$. Since
	$z^\tr S^\tr S z = \|Sz\|^2$, it follows that $Sz = 0$. Hence $\ker(\Wo) \subseteq \ker(S)$. Taking orthogonal complements yields
	\begin{equation}
		\mathrm{range}(S^\tr)=\ker(S)^\perp \subseteq \ker(\Wo)^\perp.
	\end{equation}
	Because $\Wo$ is symmetric, $\ker(\Wo)^\perp = \mathrm{range}(\Wo)$. Therefore, $\mathrm{range}(S^\tr) \subseteq \mathrm{range}(\Wo)$.
\end{proof}

\subsection{Mismatch Bound}

We define the mismatch as
\begin{equation}\label{eq:Delta_def}
	\Delta(t) \triangleq \big[\mathcal{L}_0 w_p(x,\rho) + \gamma\, w_p(x,\rho)\big]
	- \big[\mathcal{L}_0 w_p(\bar{x},\rho) + \gamma\, w_p(\bar{x},\rho)\big],
\end{equation}
where $\mathcal{L}_0 w_p$ is the no-assistance drift of the workload from~\eqref{eq:L0wp}, $w_p$ is the workload from~\eqref{eq:wp}, $\bar{x}$ is the baseline state from~\eqref{eq:predictor}, and $\gamma > 0$ is the envelope contraction rate.

\begin{lemma}\label{lem:mismatch}
	Consider the baseline outputs \(y_k=\bar y_k+h_k^\tr e\) for \(k=0,1,2\), obtained by evaluating~\eqref{eq:outputs} at $\bar{x}$ defined in~\eqref{eq:predictor}, where $e = x - \bar{x}$ is defined in~\eqref{eq:error}. Then, the mismatch $\Delta$ in~\eqref{eq:Delta_def} can be rewritten as
	\begin{equation}\label{eq:Delta_expand}
		\Delta = 2\,e^\tr v(\bar{x},\rho) + e^\tr Q_{\mathrm{sym}}\, e,
	\end{equation}
	where
	\begin{subequations}\label{eq:va_defs}
		\begin{align}
			v(\bar{x},\rho) &= \bar{y}_0\, a_0 + \bar{y}_1\, a_1 + \bar{y}_2\, a_2 \in \Reals^n, \label{eq:vt}\\
			a_0 &= h_1 + \gamma\, h_0, \; a_1 = h_0 + h_2 + \gamma\, h_1, \; a_2 = h_1 \in \Reals^n, \label{eq:a_defs}
		\end{align}
	\end{subequations}
	with $h_0, h_1, h_2$ are defined in~\eqref{eq:outputs}. The quadratic coefficient is defined as
	\begin{equation}\label{eq:Qsym}
		Q_{\mathrm{sym}} = \gamma\, h_0 h_0^\tr + h_0 h_1^\tr + h_1 h_0^\tr
		+ \gamma\, h_1 h_1^\tr + h_1 h_2^\tr + h_2 h_1^\tr.
	\end{equation}
\end{lemma}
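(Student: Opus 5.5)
The plan is a direct algebraic expansion. I will treat the bracketed quantity in~\eqref{eq:Delta_def} as a polynomial in the three outputs and expand it around the baseline values.

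\textbf{Step 1 (common form).} Using~\eqref{eq:L0wp} and $w_p = y_0^2 + y_1^2$, both brackets in~\eqref{eq:Delta_def} are the same function
\begin{equation*}
F(y_0,y_1,y_2) \triangleq 2y_0y_1 + 2y_1y_2 + \gamma(y_0^2 + y_1^2).
\end{equation*}
The first bracket is $F$ evaluated at the actual outputs $y_k$, and the second is $F$ evaluated at the baseline outputs $\bar y_k$. Hence $\Delta = F(y) - F(\bar y)$.

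\textbf{Step 2 (output deviations).} From~\eqref{eq:y_stack}, each $y_k$ is $h_k^\tr x$ plus a term that depends only on $\rho$. Both evaluations use the same $\rho$, so these reference terms cancel in the difference. This justifies the relation $y_k = \bar y_k + \delta_k$ with $\delta_k \triangleq h_k^\tr e$ assumed in the statement.

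\textbf{Step 3 (expand and collect).} Substitute $y_k = \bar y_k + \delta_k$ into $F(y) - F(\bar y)$. The zeroth-order terms cancel, leaving a part linear in $\delta$ and a part quadratic in $\delta$.

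For the linear part, I group by $\bar y_k$:
\begin{itemize}
\item the coefficient of $\bar y_0$ is $2(\delta_1 + \gamma\delta_0) = 2e^\tr(h_1 + \gamma h_0)$, which gives $a_0$;
\item the coefficient of $\bar y_1$ is $2(\delta_0 + \delta_2 + \gamma\delta_1)$, which gives $a_1$;
\item the coefficient of $\bar y_2$ is $2\delta_1$, which gives $a_2 = h_1$.
\end{itemize}
Summing these yields $2e^\tr v(\bar x,\rho)$ with $v$ as in~\eqref{eq:vt}.

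The quadratic part is $2\delta_0\delta_1 + 2\delta_1\delta_2 + \gamma(\delta_0^2 + \delta_1^2)$. I will write each product as a symmetric quadratic form, for example $2\delta_0\delta_1 = e^\tr(h_0h_1^\tr + h_1h_0^\tr)e$ and $\delta_0^2 = e^\tr h_0h_0^\tr e$. Collecting these reproduces $Q_{\mathrm{sym}}$ in~\eqref{eq:Qsym}.

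There is no real obstacle here. The only point needing care is Step 2, namely that the $\rho$-dependent terms of~\eqref{eq:y_stack} cancel exactly. That cancellation is what lets $\Delta$ depend on $e$ only through the $h_k$.
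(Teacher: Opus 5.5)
Your proposal is correct and follows essentially the same route as the paper's proof: substitute $y_k=\bar y_k+h_k^\tr e$ into the difference of the two brackets in \eqref{eq:Delta_def}, cancel the zeroth-order terms, group the linear terms by $\bar y_k$ to obtain $a_0,a_1,a_2$, and collect the quadratic terms into $Q_{\mathrm{sym}}$. Your explicit check that the $\rho$-dependent offsets in \eqref{eq:y_stack} cancel, so that $y_k-\bar y_k=h_k^\tr e$, is a small justification that the paper simply builds into the lemma's hypothesis.
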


\begin{proof}
	Substituting $y_k = \bar{y}_k + h_k^\tr e$ into $w_p = y_0^2 + y_1^2$ gives
	$y_k^2 - \bar{y}_k^2 = 2\,\bar{y}_k\,(h_k^\tr e) + (h_k^\tr e)^2$.
	Similarly, for cross terms in $\mathcal{L}_0 w_p = 2 y_0 y_1 + 2 y_1 y_2$,
	$y_i y_j - \bar{y}_i \bar{y}_j = \bar{y}_i(h_j^\tr e) + \bar{y}_j(h_i^\tr e) + (h_i^\tr e)(h_j^\tr e)$.
	Expanding $\Delta$ from~\eqref{eq:Delta_def}, we get $\Delta = \gamma(y_0^2 - \bar{y}_0^2) + \gamma(y_1^2 - \bar{y}_1^2) + 2(y_0 y_1 - \bar{y}_0\bar{y}_1) + 2(y_1 y_2 - \bar{y}_1\bar{y}_2)$. Collecting the linear terms in $e$ yields $\Delta_{\mathrm{lin}} = 2\,e^\tr v(\bar{x},\rho)$ with $v$ as in~\eqref{eq:vt}--\eqref{eq:a_defs}. The quadratic terms give $e^\tr Q_{\mathrm{sym}}\, e$.
\end{proof}

\begin{proposition}\label{prop:bound}
	Under the deviation tube constraint $e^\tr \Wo\, e \le \eta^2$ in~\eqref{eq:b3}, $|\Delta(t)| \le M(\bar{x},\rho) \triangleq 2\,\eta\, L(\bar{x},\rho) + \eta^2\, \rho_Q$, where $\Delta$ is defined in~\eqref{eq:Delta_def}, $\eta$ is the tube radius in~\eqref{eq:b3}, $L(\bar{x},\rho) = \sqrt{\bar{y}^\tr P_g\, \bar{y}}$, $\bar{y} = (\bar{y}_0, \bar{y}_1, \bar{y}_2)^\tr$, $[P_g]_{ij} = a_i^\tr \Wod\, a_j$, $P_g \in \Reals^{3 \times 3}$, $\rho_Q = \rho\!\left(\Wo^{\dagger 1/2}\, Q_{\mathrm{sym}}\, \Wo^{\dagger 1/2}\right)$, where $a_i$ are the coefficient vectors in~\eqref{eq:a_defs}, $Q_{\mathrm{sym}}$ the quadratic matrix in~\eqref{eq:Qsym}, $\Wod$ the Moore--Penrose pseudoinverse of $\Wo$ in~\eqref{eq:lyap}, $\Wo^{\dagger 1/2} \triangleq (\Wod)^{1/2}$ its positive-semidefinite square root, and $\rho(\cdot)$ the spectral radius. Since $\Wo^{\dagger 1/2} Q_{\mathrm{sym}}\, \Wo^{\dagger 1/2}$ is symmetric, $\rho_Q$ equals its induced $2$-norm.
\end{proposition}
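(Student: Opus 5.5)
Starting from the decomposition in Lemma~\ref{lem:mismatch}, $\Delta = 2\,e^\tr v(\bar{x},\rho) + e^\tr Q_{\mathrm{sym}}\, e$, the plan is to apply the triangle inequality, $|\Delta| \le 2|e^\tr v| + |e^\tr Q_{\mathrm{sym}} e|$, and bound each term separately on the tube $e^\tr \Wo e \le \eta^2$. The difficulty is that $\Wo$ may be singular, so $e$ need not be bounded in the Euclidean sense. However, the only directions of $e$ that enter $\Delta$ are those seen through $h_0,h_1,h_2$, i.e.\ through $S$. Lemma~\ref{lem:range} gives $\mathrm{range}(S^\tr)\subseteq\mathrm{range}(\Wo)$, and this is what makes the pseudoinverse bounds legitimate.

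\emph{Step 1 (projection).} Let $\Pi = \Wo\Wod = \Wod\Wo$ be the orthogonal projector onto $\mathrm{range}(\Wo)$. Each $h_k$ is a column of $S^\tr$, so $\Pi h_k = h_k$. Since $a_0,a_1,a_2$ and $v$ are linear combinations of the $h_k$, it follows that $\Pi a_i = a_i$ and $\Pi v = v$. Likewise $Q_{\mathrm{sym}}$ is a sum of outer products of the $h_k$, so $\Pi Q_{\mathrm{sym}}\Pi = Q_{\mathrm{sym}}$. Consequently $e^\tr v = (\Pi e)^\tr v$ and $e^\tr Q_{\mathrm{sym}} e = (\Pi e)^\tr Q_{\mathrm{sym}} (\Pi e)$. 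Now set $z \triangleq \Wo^{1/2} e$, so that $\|z\|^2 = e^\tr\Wo e\le\eta^2$. Using $\Wo^{\dagger 1/2}\Wo^{1/2} = \Pi$ (both matrices are functions of the same symmetric $\Wo$), we get $\Pi e = \Wo^{\dagger1/2} z$.

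\emph{Step 2 (linear term).} Write $e^\tr v = z^\tr \Wo^{\dagger1/2} v$. By Cauchy--Schwarz, $|e^\tr v|\le \eta\,\|\Wo^{\dagger1/2}v\| = \eta\sqrt{v^\tr\Wod v}$. Expanding $v=\sum_i \bar y_i a_i$ gives $v^\tr \Wod v = \sum_{i,j}\bar y_i\bar y_j\, a_i^\tr\Wod a_j = \bar y^\tr P_g\bar y$. This yields $2|e^\tr v|\le 2\eta L(\bar x,\rho)$.

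\emph{Step 3 (quadratic term).} Write $e^\tr Q_{\mathrm{sym}} e = z^\tr(\Wo^{\dagger1/2}Q_{\mathrm{sym}}\Wo^{\dagger1/2})z$. The middle matrix is symmetric, so the Rayleigh quotient bound gives $|e^\tr Q_{\mathrm{sym}} e| \le \rho_Q\|z\|^2\le \eta^2\rho_Q$. For a symmetric matrix the spectral radius equals the induced $2$-norm, which gives the final remark of the statement. Adding the bounds from Steps~2 and~3 gives $|\Delta|\le M(\bar x,\rho)$.

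The main obstacle is Step~1: we must justify carefully that $\Pi e=\Wo^{\dagger1/2}\Wo^{1/2}e$ and that replacing $e$ by $\Pi e$ loses nothing. I would handle this through the eigendecomposition $\Wo = U\Lambda U^\tr$. In that basis the pseudoinverse, the square roots and $\Pi$ are all diagonal in $U$, and Lemma~\ref{lem:range} guarantees that the $h_k$ have no components in the kernel directions.
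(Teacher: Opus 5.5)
Your proposal is correct and follows essentially the same route as the paper: you apply the triangle inequality to the decomposition from Lemma~\ref{lem:mismatch}, use Lemma~\ref{lem:range} to place the $h_k$, $a_i$, $v$ and $Q_{\mathrm{sym}}$ in $\mathrm{range}(\Wo)$, bound the linear term by Cauchy--Schwarz, and bound the quadratic term with the substitution $\Wo^{1/2}e$ and a Rayleigh-quotient estimate. Your explicit projector $\Pi=\Wo\Wod$ and the identity $\Pi e=\Wo^{\dagger 1/2}\Wo^{1/2}e$ just spell out two steps the paper states tersely: the pseudoinverse form of Cauchy--Schwarz, and the observation that $Q_{\mathrm{sym}}$ annihilates $\ker(\Wo)$.
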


\begin{proof}
	Let $\Delta_{\mathrm{lin}} \triangleq 2\,e^\tr v(\bar{x},\rho)$ and $\Delta_{\mathrm{quad}} \triangleq e^\tr Q_{\mathrm{sym}} e$, so that by~\eqref{eq:Delta_expand}, $\Delta=\Delta_{\mathrm{lin}}+\Delta_{\mathrm{quad}}$.

	\textit{Step 1} (Linear term).
	By the Cauchy--Schwarz inequality, we have $|e^\tr v| \le \sqrt{e^\tr \Wo e}\,\sqrt{v^\tr \Wod v}$ provided that $v \in \mathrm{range}(\Wo)$.
	Since $h_0,h_1,h_2 \in \mathrm{range}(S^\tr)$ by definition of $S$ (see~\eqref{eq:y_stack}), Lemma~\ref{lem:range} gives
	$h_0,h_1,h_2 \in \mathrm{range}(\Wo)$. Therefore each coefficient vector
	$a_i$ in~\eqref{eq:a_defs}, being a linear combination of $h_0,h_1,h_2$,
	also belongs to $\mathrm{range}(\Wo)$, and hence
	$v(\bar{x},\rho)=\bar y_0 a_0+\bar y_1 a_1+\bar y_2 a_2 \in \mathrm{range}(\Wo)$.
	Thus $|\Delta_{\mathrm{lin}}|
	= 2|e^\tr v|
	\le 2\sqrt{e^\tr \Wo e}\,\sqrt{v^\tr \Wod v}$. Using the tube constraint $e^\tr \Wo e \le \eta^2$ and
	$v^\tr \Wod v
	= \sum_{i,j=0}^2 \bar y_i \bar y_j\, a_i^\tr \Wod a_j
	= \bar y^\tr P_g \bar y$, we obtain $|\Delta_{\mathrm{lin}}| \le 2\,\eta\,L(\bar{x},\rho)$.

	\textit{Step 2} (Quadratic term).
	By Lemma~\ref{lem:range}, $h_0,h_1,h_2 \in \mathrm{range}(\Wo)$. Since $\Wo$ is symmetric, $\mathrm{range}(\Wo)=\ker(\Wo)^\perp$. Hence, for every $z \in \ker(\Wo)$, $h_i^\tr z = 0$ for $i=0,1,2$. Because $Q_{\mathrm{sym}}$ in~\eqref{eq:Qsym} is a sum of terms of the form $h_i h_j^\tr$, it follows that $Q_{\mathrm{sym}} z = 0$ for all $z \in \ker(\Wo)$. Therefore, $e^\tr Q_{\mathrm{sym}} e$ depends only on the component of $e$ in $\mathrm{range}(\Wo)$. Let $\tilde e = \Wo^{1/2} e$. Then $\|\tilde e\|^2 = e^\tr \Wo e$, and $|\Delta_{\mathrm{quad}}|
	= |e^\tr Q_{\mathrm{sym}} e|
	= |\tilde e^\tr (\Wo^{\dagger 1/2} Q_{\mathrm{sym}} \Wo^{\dagger 1/2}) \tilde e|$. Since $\Wo^{\dagger 1/2} Q_{\mathrm{sym}} \Wo^{\dagger 1/2}$ is symmetric,
	$|\Delta_{\mathrm{quad}}|
	\le \rho_Q \|\tilde e\|^2
	= \rho_Q\, e^\tr \Wo e
	\le \rho_Q\, \eta^2$.

	\textit{Step 3.}
	Combining the two bounds gives
	\begin{equation}\label{eq:proposition1}
		|\Delta|
		\le |\Delta_{\mathrm{lin}}| + |\Delta_{\mathrm{quad}}|
		\le 2\eta L(\bar{x},\rho) + \eta^2 \rho_Q
		= M(\bar{x},\rho). \qedhere
	\end{equation}
\end{proof}

Proposition~\ref{prop:bound} converts the deviation tube constraint $e^\tr \Wo\, e \le \eta^2$ into a scalar workload-rate margin that the envelope dynamics can absorb. The matrices $P_g$ and $\rho_Q$ are precomputed once from the system matrices.

\subsection{Fault-Isolated Envelope Dynamics}
\label{sec:envelope}
We initialize the envelopes $w_L$ and $w_U$ in~\eqref{eq:b12} at the automation engagement time $t_0$. Since the baseline model in~\eqref{eq:predictor} is initialized with $\bar{x}(t_0)=x(t_0)$, the automation-induced deviation in~\eqref{eq:error} at $t_0$ is $e(t_0)=0$. Hence, the workload in~\eqref{eq:wp} is the same with the workload of the baseline model, i.e. $w_p(x(t_0),\rho(t_0)) = w_p(\bar{x}(t_0),\rho(t_0))$. We therefore center the initial envelope at the baseline workload and choose its initial width according to the mismatch bound $M$ (see Proposition~\ref{prop:bound}) as
\begin{equation}
	d(t_0)\triangleq w_U(t_0)-w_L(t_0)=\frac{2M(\bar{x}(t_0),\rho(t_0))}{\gamma},
\end{equation}
and the envelopes are initialized as
\begin{subequations}
	\begin{equation}
		w_L(t_0)=w_p(\bar{x}(t_0),\rho(t_0))-\frac{d(t_0)}{2},
	\end{equation}
	\begin{equation}
		w_U(t_0)=w_p(\bar{x}(t_0),\rho(t_0))+\frac{d(t_0)}{2}.
	\end{equation}
\end{subequations}

The envelope dynamics are
\begin{subequations}\label{eq:envelope}
	\begin{align}
		\dot{w}_L &= \mathcal{L}_0 w_p(\bar{x},\rho) + \gamma\big(w_p(\bar{x},\rho) - w_L\big) - M(\bar{x},\rho), \label{eq:wL}\\
		\dot{w}_U &= \mathcal{L}_0 w_p(\bar{x},\rho) - \gamma\big(w_U - w_p(\bar{x},\rho)\big) + M(\bar{x},\rho), \label{eq:wU}
	\end{align}
\end{subequations}
where $\mathcal{L}_0 w_p(\bar{x},\rho)$ is the no-assistance workload drift from~\eqref{eq:L0wp} evaluated at the baseline state $\bar{x}$ from~\eqref{eq:predictor}, $w_p(\bar{x},\rho)$ is the baseline workload from~\eqref{eq:wp}, $M(\bar{x},\rho)$ is the mismatch bound from Proposition~\ref{prop:bound}, and $\gamma > 0$ is the contraction rate from~\eqref{eq:Delta_def}. Because these dynamics depend on $\bar{x}$ rather than on $x$, the envelope is computable from the baseline model~\eqref{eq:predictor} alone and is decoupled from both the actual state $x$ and the safety filter output $y_a$.

\begin{theorem}\label{thm:workload}
	$b_1$ and $b_2$ that are defined in~\eqref{eq:b12} are valid ACBFs (see Definition~\ref{def:acbf}) for the dynamics~\eqref{eq:lindyn}, with $\alpha_1(s)=\alpha_2(s)=\gamma s$.
\end{theorem}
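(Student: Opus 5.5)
We verify the ACBF condition of Definition~\ref{def:acbf} directly for each barrier, with $\alpha(s)=\gamma s$. The barriers depend on time through $w_L(t)$ and $w_U(t)$, so $\mathcal{L}_0 b_i$ must include the envelope derivatives $\dot w_L$ and $\dot w_U$ from~\eqref{eq:envelope}. These are exogenous: they depend only on $\bar x$ and $\rho$, not on $x$ or $y_a$. Hence
\[
\mathcal{L}_0 b_1 = \mathcal{L}_0 w_p(x,\rho) - \dot w_L, \qquad \mathcal{L}_0 b_2 = \dot w_U - \mathcal{L}_0 w_p(x,\rho).
\]
Since the ACBF condition is evaluated on the safe set, we work on the augmented set where $b_1,b_2,b_3\ge 0$. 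In particular $e^\tr \Wo e\le\eta^2$ holds there, which is the hypothesis of Proposition~\ref{prop:bound}.

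\textbf{Lower barrier $b_1$.} Substituting~\eqref{eq:wL} gives
\[
\mathcal{L}_0 b_1+\gamma b_1=\mathcal{L}_0 w_p(x,\rho)+\gamma w_p(x,\rho)-\mathcal{L}_0 w_p(\bar x,\rho)-\gamma w_p(\bar x,\rho)+M(\bar x,\rho).
\]
The $\gamma w_L$ terms cancel: one comes from $\gamma b_1$, the other from $-\dot w_L$. By the definition~\eqref{eq:Delta_def}, the right-hand side is exactly $\Delta+M$. Proposition~\ref{prop:bound} gives $|\Delta|\le M$, so $\Delta+M\ge 0$.

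\textbf{Upper barrier $b_2$.} Substituting~\eqref{eq:wU} gives
\[
\mathcal{L}_0 b_2+\gamma b_2=\mathcal{L}_0 w_p(\bar x,\rho)+\gamma w_p(\bar x,\rho)+M-\mathcal{L}_0 w_p(x,\rho)-\gamma w_p(x,\rho)=M-\Delta\ge 0.
\]
Again the $\gamma w_U$ terms cancel, and Proposition~\ref{prop:bound} finishes the argument.

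\textbf{Main obstacle.} The subtle point is that $\Delta$ is bounded only inside the deviation tube $C_3$. The ACBF inequality therefore holds on the augmented safe set $\{b_1,b_2,b_3\ge 0\}$, not on $\{b_i\ge0\}$ alone. This suffices for the QP feasibility argument: with $y_a=0$, Lemma~\ref{lem:b3} keeps the $b_3$ constraint satisfied, and the derivation above keeps the $b_1$ and $b_2$ constraints satisfied, all at the same time.

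Two further points need care. First, the dependence on $\rho$: $w_p$ uses $r$ and $\dot r$, and $\mathcal{L}_0 w_p$ uses $\ddot r$ through $y_2$. These reference-derivative terms appear identically at $x$ and at $\bar x$, since $y_k=\bar y_k+h_k^\tr e$, so they are already accounted for in Lemma~\ref{lem:mismatch}; this requires $\nu\ge2$. Second, the $y_a$-dependence: $L_{g_a}w_p$ from~\eqref{eq:Lgawp} enters only the QP constraint and not the ACBF condition, which is evaluated at $y_a=0$. So $y_a=0$ remains admissible.
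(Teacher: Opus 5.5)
Your proposal is correct and follows essentially the same argument as the paper: you substitute the envelope dynamics~\eqref{eq:envelope} so that the $\gamma w_L$ and $\gamma w_U$ terms cancel, identify $\mathcal{L}_0 b_1+\gamma b_1=\Delta+M$ and $\mathcal{L}_0 b_2+\gamma b_2=M-\Delta$, and conclude via Proposition~\ref{prop:bound}. Your caveat is also correct: the inequality is only guaranteed on the intersection with the tube $C_3$, not on $\{b_i\ge 0\}$ alone as Definition~\ref{def:acbf} literally requires, and this makes explicit a hypothesis the paper's proof uses implicitly when it invokes Proposition~\ref{prop:bound}.
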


\begin{proof}
	We must show that $\mathcal{L}_0 b_i(x,\rho)+\gamma\, b_i(x,\rho)\ge 0$ for $i=1,2$ on the corresponding safe sets. For $b_1 = w_p(x,\rho)-w_L$, we have $\mathcal{L}_0 b_1 = \mathcal{L}_0 w_p(x,\rho)-\dot w_L$. Substituting~\eqref{eq:wL}, $\mathcal{L}_0 b_1+\gamma b_1=\mathcal{L}_0 w_p(x,\rho)-\dot w_L+\gamma\bigl(w_p(x,\rho)-w_L\bigr) = \Delta + M(\bar x,\rho) \ge 0$, where $\Delta$ is the mismatch defined in~\eqref{eq:Delta_def}, and the last inequality follows from Proposition~\ref{prop:bound}.

	For $b_2 = w_U-w_p(x,\rho)$, we similarly have $\mathcal{L}_0 b_2 = \dot w_U-\mathcal{L}_0 w_p(x,\rho)$. Substituting~\eqref{eq:wU}, $\mathcal{L}_0 b_2+\gamma b_2=\dot w_U-\mathcal{L}_0 w_p(x,\rho)+\gamma\bigl(w_U-w_p(x,\rho)\bigr) =-\Delta + M(\bar x,\rho) \ge 0$.

	Thus both $b_1$ and $b_2$ satisfy the ACBF condition of Definition~\ref{def:acbf} with
	$\alpha_i(s)=\gamma s$, $i=1,2$.
\end{proof}

\begin{lemma}\label{lem:separation}
	The separation $d(t) \triangleq w_U(t) - w_L(t)$, where $w_U$ and $w_L$ are the envelope bounds from~\eqref{eq:envelope}, satisfies $d(t) \ge 0$ for all $t \ge t_0$, and is bounded.
\end{lemma}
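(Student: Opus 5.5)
Subtracting \eqref{eq:wL} from \eqref{eq:wU} makes the drift terms $\mathcal{L}_0 w_p(\bar{x},\rho)$ and the baseline workload terms $\gamma w_p(\bar{x},\rho)$ cancel. This leaves the scalar linear ODE
\begin{equation*}
	\dot d = -\gamma d + 2M(\bar{x}(t),\rho(t)), \qquad d(t_0) = \frac{2M(\bar{x}(t_0),\rho(t_0))}{\gamma}.
\end{equation*}
Variation of constants then gives
\begin{equation*}
	d(t) = e^{-\gamma(t-t_0)} d(t_0) + 2\int_{t_0}^{t} e^{-\gamma(t-s)} M(\bar{x}(s),\rho(s))\,ds .
\end{equation*}
Both claims will be read off this formula.

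For nonnegativity, the plan is to show $M \ge 0$ pointwise. By Proposition~\ref{prop:bound}, $M = 2\eta L + \eta^2\rho_Q$. Here $L=\sqrt{\bar y^\tr P_g \bar y}$ is well defined and nonnegative, because $P_g$ is a Gram matrix: $P_g = G^\tr \Wod G$ with $G=[a_0\; a_1\; a_2]$, and $\Wod \ge 0$ since $\Wo\ge 0$. Also $\rho_Q \ge 0$ as a spectral radius, and $\eta>0$. Hence $d(t_0)\ge 0$, the integrand is nonnegative, and $d(t)\ge 0$ for all $t\ge t_0$.

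For boundedness, the plan is to bound $M(\bar{x}(t),\rho(t))$ uniformly in $t$. Since $\rho_Q$ is a constant, it suffices to bound $L \le \norm{P_g}^{1/2}\norm{\bar y}$, which reduces to bounding $\bar y$. By \eqref{eq:y_stack}, $\bar y = S\bar{x} + (\text{terms linear in } r,\dot r,\ddot r)$. Because $A$ is Hurwitz, the baseline \eqref{eq:predictor} is input-to-state stable, so $\bar{x}$ is bounded whenever $r$ is bounded. This step is the main obstacle, since the paper only assumes $r\in C^\nu$ with derivatives available. I would make explicit the natural standing hypothesis that $r,\dot r,\ddot r$ are bounded on $[t_0,\infty)$, as holds for the tracking references considered. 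Under it, $\bar y$ and hence $M$ are bounded by some $\bar M$. The variation-of-constants formula then gives
\begin{equation*}
	0\le d(t)\le d(t_0) + \frac{2\bar M}{\gamma}.
\end{equation*}
Moreover, $d(t)$ asymptotically tracks $2M/\gamma$ up to an exponentially decaying term.
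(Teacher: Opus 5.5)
Your proposal is correct and follows the paper's proof essentially step for step. You subtract the envelope equations to get $\dot d = -\gamma d + 2M$, write the variation-of-constants solution, use $d(t_0)\ge 0$ and $M\ge 0$ for nonnegativity, and bound $M$ uniformly via boundedness of $\bar{x}$ under Hurwitz $A$. Your two additions are legitimate refinements of points the paper leaves implicit: you check that $P_g$ is a Gram matrix, so $L$ is well defined, and you make explicit that $\dot r$ and $\ddot r$ (not just $r$) must be bounded, because they enter $\bar y_1,\bar y_2$ through~\eqref{eq:y_stack}.
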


\begin{proof}
	Subtracting~\eqref{eq:wL} from~\eqref{eq:wU} gives $\dot{d} = -\gamma\, d + 2\,M(\bar{x},\rho)$, where $\gamma$ and $M$ are introduced in~\eqref{eq:Delta_def} and Proposition~\ref{prop:bound}. The explicit solution is given as
	\begin{equation}\label{eq:d_sol}
		d(t) = e^{-\gamma(t - t_0)} d(t_0) + 2\!\int_{t_0}^{t} e^{-\gamma(t-\tau)} M(\bar{x}(\tau),\rho(\tau))\, d\tau.
	\end{equation}
	Since $d(t_0) \ge 0$ and $M(\bar{x},\rho) \ge 0$ (because $M = 2\eta L + \eta^2 \rho_Q$ with $\eta > 0$, $L \ge 0$, and $\rho_Q \ge 0$), all terms in~\eqref{eq:d_sol} are non-negative. Consequently, $w_U(t) \ge w_L(t)$ is maintained for all time. For boundedness, since $A$ is Hurwitz and $r$ is bounded, $\bar{x}$ is bounded, hence $L(\bar{x},\rho)$ is bounded, and therefore $M(\bar{x},\rho) \le M_{\sup} \triangleq \sup_{t \ge t_0} M(\bar{x},\rho) < \infty$. Then, $d(t) \le e^{-\gamma(t-t_0)} d(t_0) + 2M_{\sup}/\gamma$, which is bounded.
\end{proof}

\subsection{Quadratic Program (QP) with Three Barriers}

The three barriers, $b_1$, $b_2$ from~\eqref{eq:b12} (workload bounds) and $b_3$ from~\eqref{eq:b3} (deviation tube) jointly depend on the augmented state $z \triangleq (x,\, \bar{x},\, w_L,\, w_U) \in \Reals^{2n+2}$, where $x$ is the augmented state from~\eqref{eq:lindyn}, $\bar{x}$ is the baseline state from~\eqref{eq:predictor}, and $w_L, w_U$ are the envelope bounds from~\eqref{eq:envelope}. The augmented dynamics collect~\eqref{eq:lindyn}, \eqref{eq:predictor}, and~\eqref{eq:envelope} as
\begin{equation}\label{eq:augdyn}
	\dot{z} = F_0(z,\rho) + G(z)\, y_a,
\end{equation}
where $F_0$ stacks $Ax + B_r r$ from~\eqref{eq:lindyn}, $A\bar{x} + B_r r$ from~\eqref{eq:predictor}, $\dot{w}_L$ from~\eqref{eq:wL}, and $\dot{w}_U$ from~\eqref{eq:wU} (all evaluated at $y_a = 0$), and $G = \begin{bmatrix} B_{y_a}^\tr & 0_{1 \times n} & 0 & 0 \end{bmatrix}^\tr \in \Reals^{(2n+2)\times 1}$ with $B_{y_a}$ from~\eqref{eq:lindyn}. The safe set is
\begin{equation}\label{eq:safeset}
	\mathcal{C}(\rho) = \{z : b_1(x,w_L,\rho) \ge 0\} \cap \{z : b_2(x,w_U,\rho) \ge 0\} \cap C_3,
\end{equation}
where $b_1, b_2$ are defined in~\eqref{eq:b12} and $C_3$ is given after~\eqref{eq:b3}.

The safety filter solves the QP~\eqref{eq:qp} using~\eqref{eq:b12} and~\eqref{eq:b3} as
\begin{align}\label{eq:qp_full}
	y_a^\star &= \arg\min_{y_a \in Y_a}\; \|y_a - y_{a,\mathrm{nom}}\|^2 \\
	\text{s.t.}\;\; & \mathcal{L}_a b_1 + \gamma\, b_1 \ge 0, \quad \mathcal{L}_a b_2 + \gamma\, b_2 \ge 0, \quad \mathcal{L}_a b_3 + \gamma_3\, b_3 \ge 0, \notag
\end{align}
where $Y_a \subseteq \mathbb{R}$ is the admissible assistance set, assumed compact and satisfying $0 \in Y_a$, $y_{a,\mathrm{nom}}$ is the unfiltered automation output, $\gamma > 0$ and $\gamma_3 > 0$ are introduced in~\eqref{eq:envelope} and Lemma~\ref{lem:b3}, respectively, and $\mathcal{L}_a$ is given in~\eqref{eq:La}. Each constraint is affine in $y_a$ (see~\eqref{eq:Lgawp}, \eqref{eq:b3} and \eqref{eq:error}), which makes~\eqref{eq:qp_full} a scalar QP.

\begin{theorem}\label{thm:invariance}
	Let $y_a^\star$ be defined by the QP~\eqref{eq:qp_full}. If $z(t_0)$ in~\eqref{eq:augdyn} is in the safe set $\mathcal{C}(\rho(t_0))$ in~\eqref{eq:safeset}, then~\eqref{eq:augdyn} with $y_a = y_a^\star$ admits a solution for all $t \ge t_0$, and the safe set $\mathcal{C}(\rho(t))$ is forward invariant, that is, $z(t) \in \mathcal{C}(\rho(t))$, $\forall t \ge t_0$.
\end{theorem}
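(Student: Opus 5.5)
The proof follows the standard CBF argument: show that the QP always has a solution, establish existence of closed-loop solutions, and then invoke Nagumo/comparison on each barrier.

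\emph{Feasibility.} By Theorem~\ref{thm:workload}, $y_a=0$ satisfies the constraints of~\eqref{eq:qp_full} associated with $b_1,b_2$ on their respective zero-superlevel sets. By Lemma~\ref{lem:b3}, it satisfies the $b_3$ constraint on $C_3$. The step in Theorem~\ref{thm:workload} that uses Proposition~\ref{prop:bound} requires $e^\tr\Wo e\le\eta^2$, so this check is made on $\mathcal{C}(\rho)$ as a whole. Since $0\in Y_a$, the feasible set of~\eqref{eq:qp_full} is nonempty whenever $z\in\mathcal{C}(\rho(t))$. It is also closed and convex, being an intersection of half-lines with the compact $Y_a$. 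Hence the strictly convex objective has a unique minimizer $y_a^\star(t)$, bounded by $\max_{Y_a}|y|$.

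\emph{Existence of solutions.} The plan is to argue that the closed loop~\eqref{eq:augdyn} with $y_a=y_a^\star$ has a right-hand side that is locally Lipschitz in $z$ and piecewise continuous in $t$. For a scalar QP with affine constraints, $y_a^\star$ is the projection of $y_{a,\mathrm{nom}}$ onto an interval whose endpoints are ratios $-(\mathcal{L}_0 b_i+\gamma_i b_i)/L_{g_a}b_i$, intersected with $Y_a$.

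The main obstacle is regularity where some $L_{g_a}b_i$ vanishes, for example $y_1=0$ in~\eqref{eq:Lgawp}. There the interval endpoints can jump, and Lipschitz continuity can fail. Two routes are available:
\begin{itemize}
\item[(i)] Accept Carathéodory/Filippov-type solutions, since the right-hand side is bounded because $Y_a$ is compact.
\item[(ii)] Note that where $L_{g_a}b_i=0$, the $i$th constraint reduces to the ACBF inequality, which holds independently of $y_a$.
\end{itemize}
I would first try (ii), combined with the fact that $\mathcal{L}_0 b_i+\gamma_i b_i\ge0$ on $\mathcal{C}$, to show that the projection interval varies continuously. If that fails, I would fall back on (i) with measurable selections.

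Global existence comes from ruling out finite escape. $\bar x$ is bounded because $A$ is Hurwitz and $r$ is bounded. The envelopes $w_L,w_U$ obey linear stable ODEs driven by bounded terms; the separation is bounded by Lemma~\ref{lem:separation}, and $M$ is bounded by its proof. The deviation $e$ obeys $\dot e=Ae+B_{y_a}y_a^\star$ with bounded input, so $x$ is bounded on finite intervals. Hence the maximal solution extends to $[t_0,\infty)$.

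\emph{Invariance.} Along any solution, each constraint gives
\[
\frac{d}{dt}b_i(z(t),\rho(t))\ge-\gamma_i\, b_i(z(t),\rho(t)),
\]
because $\mathcal{L}_a b_i$ is exactly the total time derivative of $b_i$ along~\eqref{eq:augdyn}; this includes the $\dot w_L,\dot w_U$ and $\rho$-terms. By the comparison lemma,
\[
b_i(t)\ge e^{-\gamma_i(t-t_0)}b_i(t_0)\ge0 .
\]

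There is a circularity to handle: feasibility at time $t$ needs $z(t)\in\mathcal{C}$. I would resolve it with a first-exit-time argument. Let $t^\ast=\sup\{t: z(s)\in\mathcal{C}\ \forall s\in[t_0,t]\}$. On $[t_0,t^\ast)$ the comparison bound holds. By continuity, and by extending the QP continuously for a short time past $t^\ast$ (using feasibility on the closed set $\mathcal{C}$ together with the differential inequality), this yields a contradiction if $t^\ast<\infty$. Hence $z(t)\in\mathcal{C}(\rho(t))$ for all $t\ge t_0$.
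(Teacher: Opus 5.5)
Your skeleton is the paper's proof. The pieces match one for one: feasibility of $y_a=0$ from Lemma~\ref{lem:b3} and Theorem~\ref{thm:workload}; global existence from compactness of $Y_a$, Hurwitz $A$, boundedness of $\bar x$ and $M$, and the linear $x$- and envelope dynamics; and invariance from $\dot b_i\ge-\gamma_i b_i$ with $b_i(t_0)\ge0$. The paper simply asserts that the QP is feasible for all $t\ge t_0$ and treats $y_a^\star$ as a bounded exogenous input to a linear ODE. The two issues you raise are real, and the paper's proof passes over them. First, feasibility of the $b_1,b_2$ constraints at $y_a=0$ relies on the tube through Proposition~\ref{prop:bound}. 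Second, the feedback needs some regularity for existence. At the paper's level of rigor, your argument is complete.

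Your two repairs, however, do not work as written.

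\emph{Route (ii).} It gives continuity wherever the ACBF inequality is strict. It fails where $\mathcal{L}_0 b_i+\gamma_i b_i=0$ and $L_{g_a}b_i=0$ hold simultaneously; for $b_1$ this means $y_1=0$ and $\Delta=-M$. Near such points the ratio $(\mathcal{L}_0 b_i+\gamma_i b_i)/L_{g_a}b_i$ has no limit, so $y_a^\star$ can jump, and you genuinely need a generalized-solution route.

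\emph{First-exit step.} It cannot ``extend the QP past $t^\ast$''. Feasibility is only known inside the set whose invariance you are proving, so that step is circular.

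A clean fix uses two facts already in the paper:
\begin{itemize}
\item By the proof of Theorem~\ref{thm:workload}, the $b_1,b_2$ constraints at $y_a=0$ read $M\pm\Delta\ge0$, which is independent of $w_L,w_U$. So the QP is feasible on all of $C_3$, not just on $\mathcal{C}$.
\item By~\eqref{eq:L0b3}, $\mathcal{L}_0 b_3=\|Se\|^2\ge0$ holds everywhere, not only on $C_3$.
\end{itemize}
Now define $y_a=0$ off $C_3$. Then $\dot b_3\ge0$ on any interval where $b_3<0$. A first-exit argument on $b_3$ alone therefore gives $z(t)\in C_3$ for all $t$. Feasibility then holds along the whole trajectory, and the comparison argument handles $b_1,b_2$.

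With this extension the right-hand side is bounded and measurable and grows linearly, so Filippov solutions exist globally. Each constraint is affine in $y_a$ with coefficients continuous in $z$, and $\mathcal{L}_0 b_i+\gamma_i b_i\ge0$ on $C_3$. This includes limits taken from outside $C_3$, where $y_a=0$. So the constraints pass to limits and convex hulls, and the differential inequalities $\dot b_i\ge-\gamma_i b_i$ hold a.e.\ along those solutions.
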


\begin{proof}
	We must show that the QP~\eqref{eq:qp_full} remains feasible, the augmented closed-loop solution exists, and each barrier remains nonnegative for all $t>t_0$.

	First, by Lemma~\ref{lem:b3} and Theorem~\ref{thm:workload}, the three barriers $b_1$, $b_2$, and $b_3$ satisfy the ACBF condition at $y_a = 0$. Since $0 \in Y_a$, the input $y_a = 0$ satisfies all three constraints in~\eqref{eq:qp_full}. Therefore, the QP is feasible $\forall t \ge t_0$.
	Because $Y_a$ is compact and $y_a^\star(t) \in Y_a$ by construction, there exists a constant $\bar{y}_a > 0$ such that $|y_a^\star(t)| \le \bar{y}_a$, $\forall t \ge t_0$. Since $A$ is Hurwitz (see~\eqref{eq:Amat}) and $r$ is bounded, $\bar{x}(t)$ in~\eqref{eq:predictor} is bounded for all $t \ge t_0$. Hence, the baseline outputs $\bar y_0$, $\bar y_1$, $\bar y_2$ are bounded, and therefore the mismatch bound $M(\bar{x},\rho)$ from Proposition~\ref{prop:bound} is bounded as well. The state satisfies ${\dot{x} = Ax + B_{y_a} y_a^\star + B_r r}$. Since $A$ is constant, $r$ is bounded, and $y_a^\star$ is bounded, the forcing term $B_{y_a} y_a^\star + B_r r$ is bounded. Therefore, $x(t)$ exists for all $t \ge t_0$. The envelope states satisfy the linear scalar dynamics~\eqref{eq:envelope}, whose forcing terms are bounded because $\bar{x}$, $\rho$, and $M(\bar{x},\rho)$ are bounded. Hence $w_L(t)$ and $w_U(t)$ also exist for all $t \ge t_0$. Consequently, the augmented state $z=(x,\bar{x},w_L,w_U)$ exists for all $t \ge t_0$.

	Finally, along the closed-loop trajectory generated by~\eqref{eq:qp_full}, the QP constraints enforce ${\dot b_1 + \gamma b_1 \ge 0}$, ${\dot b_2 + \gamma b_2 \ge 0}$, and ${\dot b_3 + \gamma_3 b_3 \ge 0}$, that is, $\dot b_1 \ge -\gamma b_1$, $\dot b_2 \ge -\gamma b_2$, and $\dot b_3 \ge -\gamma_3 b_3$. Since $z(t_0) \in \mathcal{C}(\rho(t_0))$, we have $b_1(t_0)\ge 0$, $b_2(t_0)\ge 0$, and $b_3(t_0)\ge 0$. Therefore, none of the barriers can cross zero from above, and thus $b_1(t)\ge 0$, $b_2(t)\ge 0$, and $b_3(t)\ge 0$ for all $t \ge t_0$. Therefore $z(t) \in \mathcal{C}(\rho(t))$ for all $t \ge t_0$, proving forward invariance.
\end{proof}

\section{Case Study: F-16 Pitch Tracking}
\label{sec:casestudy}

\subsection{Shared Control System Description}

Consider a pitch-tracking task. The linearized longitudinal dynamics of an F-16~\cite{farre2007} about a level-flight trim condition are given by~\eqref{eq:plant_lin} with
\begin{subequations}\label{eq:f16}
	\begin{align}
		A_p^{\mathrm{ol}} &= \begin{bmatrix}
			-0.166 & -10.71 & -7.282 & -32.17 \\
			-0.002 & -0.098 & 0.928 & 0 \\
			0 & -0.625 & -0.467 & 0 \\
			0 & 0 & 1 & 0
		\end{bmatrix}\!, \\
		B_p &= \begin{bmatrix} 4.048 \\ 0.025 \\ 0.899 \\ 0 \end{bmatrix}\!, \quad
		C_p = \begin{bmatrix} 0 & 0 & 0 & 1 \end{bmatrix}\!,
	\end{align}
\end{subequations}
where $A_p^{\mathrm{ol}}$ denotes the open-loop plant matrix and $x_p = [\Delta V_T,\allowbreak\; \Delta\alpha,\allowbreak\; \Delta q,\allowbreak\; \Delta\theta]^\tr$ with units $[\mathrm{ft/s},\allowbreak\;\mathrm{rad},\allowbreak\;\mathrm{rad/s},\allowbreak\;\mathrm{rad}]$. Here, $\Delta$ denotes perturbation from the trim condition. The output $y_p = C_p x_p = \Delta\theta$ is the pitch angle, and the input $u_p$ is the elevator deflection (rad, sign-inverted so that positive $u_p$ produces nose-down moment). We apply state feedback $A_p = A_p^{\mathrm{ol}} - B_p K_p$ with a linear quadratic regulator (LQR) gain $K_p \in \Reals^{1 \times 4}$ computed from $Q_{\mathrm{lqr}} = \mathrm{diag}(4 \times 10^{-4},\,100,\,4,\,25)$ and $R_{\mathrm{lqr}} = 1/0.35^2$, penalising angle-of-attack deviations heavily while limiting elevator authority to approximately $\pm 0.35$~rad.

The human pilot is described by~\eqref{eq:pilot_ss}, with parameters $k_p = 3$, $T_p = 0.2$~s, $T_z = 0.6$~s. The aggregated system~\eqref{eq:lindyn} has dimension $n = 5$.

\subsection{Simulations}

The assistance system is selected as a replica of the human pilot model, aiming to enhance pilot performance and reduce workload. The reference signal is $r(t) = 0.1(\cos 0.5t - \sin t)$~rad. No automation assistance is applied for $t < t_0 = 10$~s. At $t_0$, both the assistance system described by the replica of the human model and the safety filter in~\eqref{eq:qp_full} are engaged simultaneously. The design parameters are $\gamma = 5$ (see~\eqref{eq:envelope}), $\gamma_3 = 10$ (see Lemma~\ref{lem:b3}), and $\eta = 0.08$ (see~\eqref{eq:b3}). Apart from the nominal assistance, we examine four fault scenarios representing distinct failure modes.

\textbf{Case~1 (Drifting assistance):} For $t \ge t_0$, the nominal assistance command is replaced by
$y_{a,\mathrm{nom}}(t) = -0.4 - 0.02(t - t_0)$, i.e., a linearly growing bias.

\textbf{Case~2 (Frozen assistance):} For $t \ge t_0$, the nominal assistance command is frozen at
$y_{a,\mathrm{nom}}(t) = 0.6$~rad.

\textbf{Case~3 (Sign-reversed assistance):} For $t \ge t_0$, the nominal assistance command is multiplied by $-1$, so the automation opposes the pilot.

\textbf{Case~4 (Aggressive assistance):} For $t \ge t_0$, the nominal assistance command is amplified by a factor of $35$, i.e.\ $y_{a,\mathrm{nom}}(t) \mapsto 35\,y_{a,\mathrm{nom}}(t)$. Because the nominal assistance is a pilot-output replica $y_{a,\mathrm{nom}} = H x + d_h r$ (with $H$ from~\eqref{eq:H_def}), the closed-loop matrix becomes $A + 35\, B_{y_a} H$, which is unstable at this gain.

\begin{figure}[tbp]
	\centering
	\includegraphics[width=0.62\textwidth]{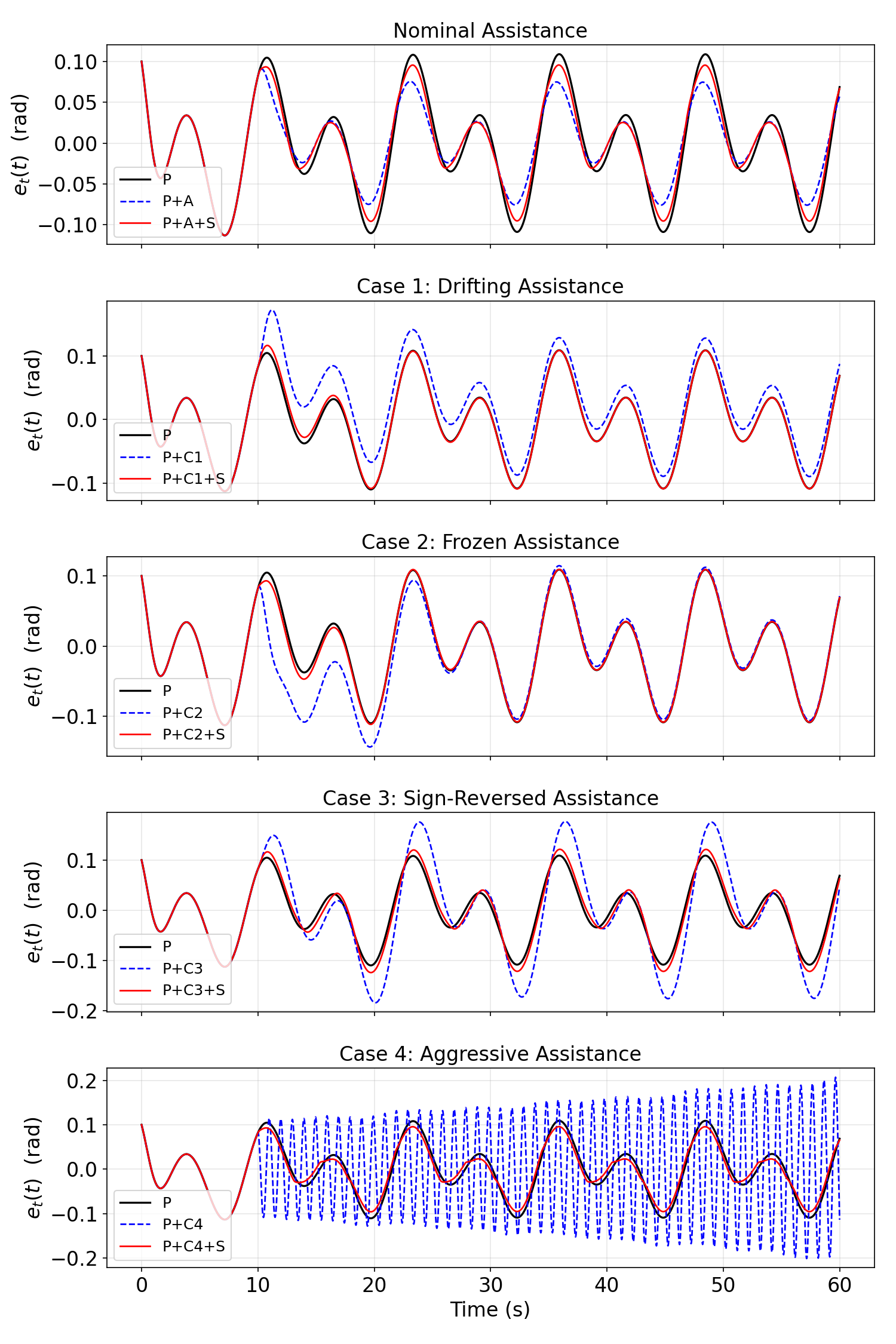}
	\caption{Tracking error $e_t(t) = r(t) - y_p(t)$ under nominal conditions and four fault scenarios (Cases~1--4). Each subplot shows the pilot-only baseline (P, black solid), unfiltered assistance (P+A or P+Ci, blue dashed), and safety-filtered assistance (P+A+S or P+Ci+S, red solid). P = pilot only, A = nominal assistance, S = safety filter, Ci = fault Case~$i$.}
	\label{fig:et}
\end{figure}

Fig.~\ref{fig:et} shows the tracking error $e_t \triangleq r - y_p$ across all scenarios. Nominal assistance improves tracking performance by reducing the tracking error, while the proposed method limits this improvement to enforce the barrier constraints in~\eqref{eq:qp_full}. Across all four fault scenarios, unfiltered assistance degrades tracking to varying degrees, but safety enforcement confines the degradation near the pilot-only baseline.

\begin{figure}[tbp]
	\centering
	\includegraphics[width=0.62\textwidth]{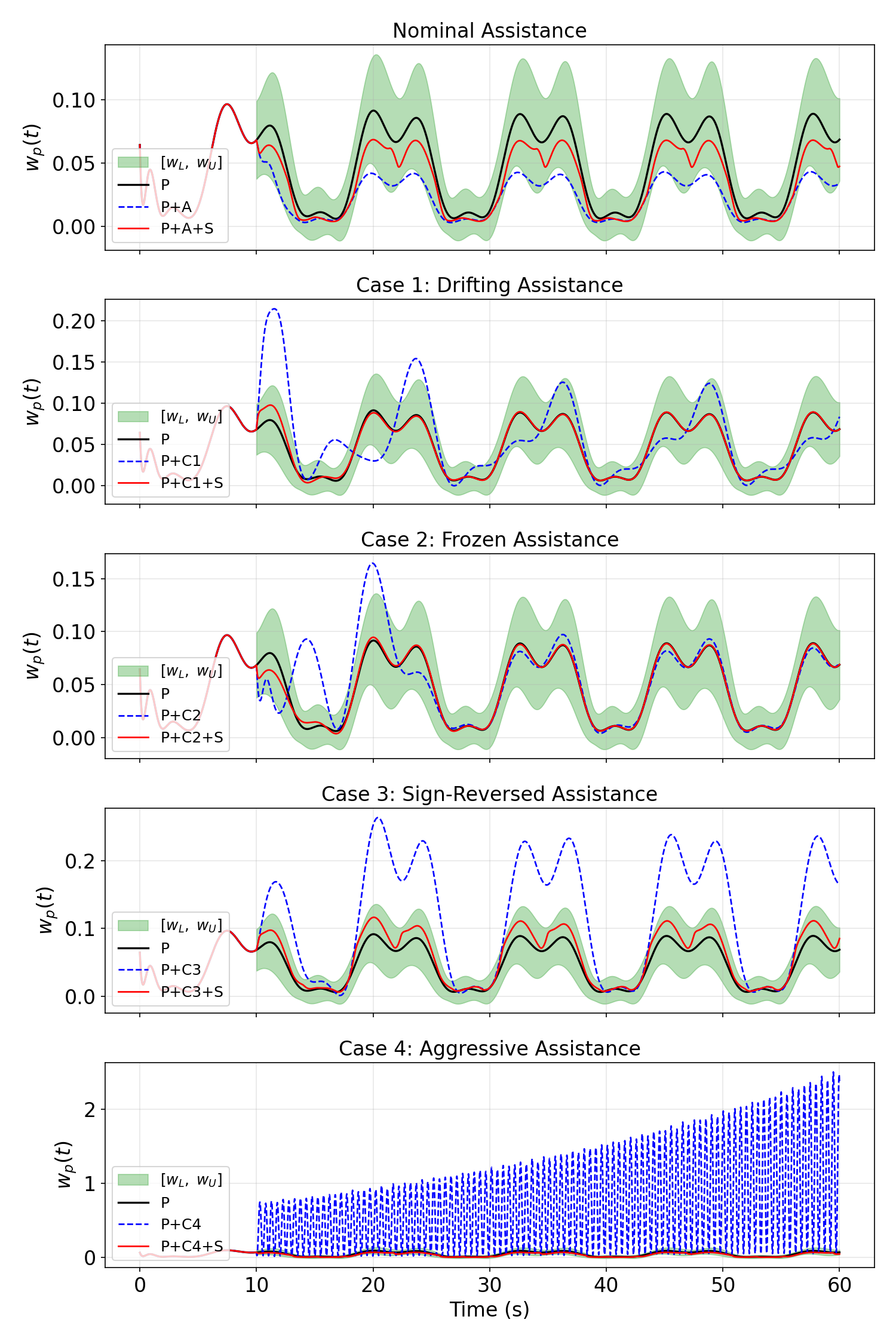}
	\caption{Workload $w_p$ defined in~\eqref{eq:wp} with pilot-anchored adaptive envelope $[w_L(t),\, w_U(t)]$ (see~\eqref{eq:envelope}) under nominal conditions and four fault scenarios (Cases~1--4). Line styles: P (black solid), P+A or P+Ci (blue dashed), P+A+S or P+Ci+S (red solid), envelope (green shaded). Notation: P = pilot only, A = nominal assistance, S = safety filter, Ci = fault Case~$i$.}
	\label{fig:env}
\end{figure}

Fig.~\ref{fig:env} shows the workload profiles with the pilot-anchored envelope. Without safety enforcement, nominal assistance diminishes workload significantly, risking loss of situational awareness~\cite{endsley1995ootl}. Each fault causes workload to deviate from the pilot-only baseline, most severely in Case~4 (destabilized closed loop). Safety enforcement keeps workload near the pilot-only case across all scenarios. The pilot-anchored bounds $w_L$ and $w_U$ governed by~\eqref{eq:envelope} depend only on $\bar{x}$ and $r$. These bounds (envelopes) adapt to the task phase while remaining decoupled from the safety filter output $y_a^\star$ in~\eqref{eq:qp_full}. The envelope shape is identical across all five scenarios (nominal and four fault scenarios) because the baseline trajectory $\bar{x}$ is generated by the same pilot-only dynamics and the same reference signal in every case, independent of the automation fault.

\begin{figure}[tbp]
	\centering
	\includegraphics[width=0.62\textwidth]{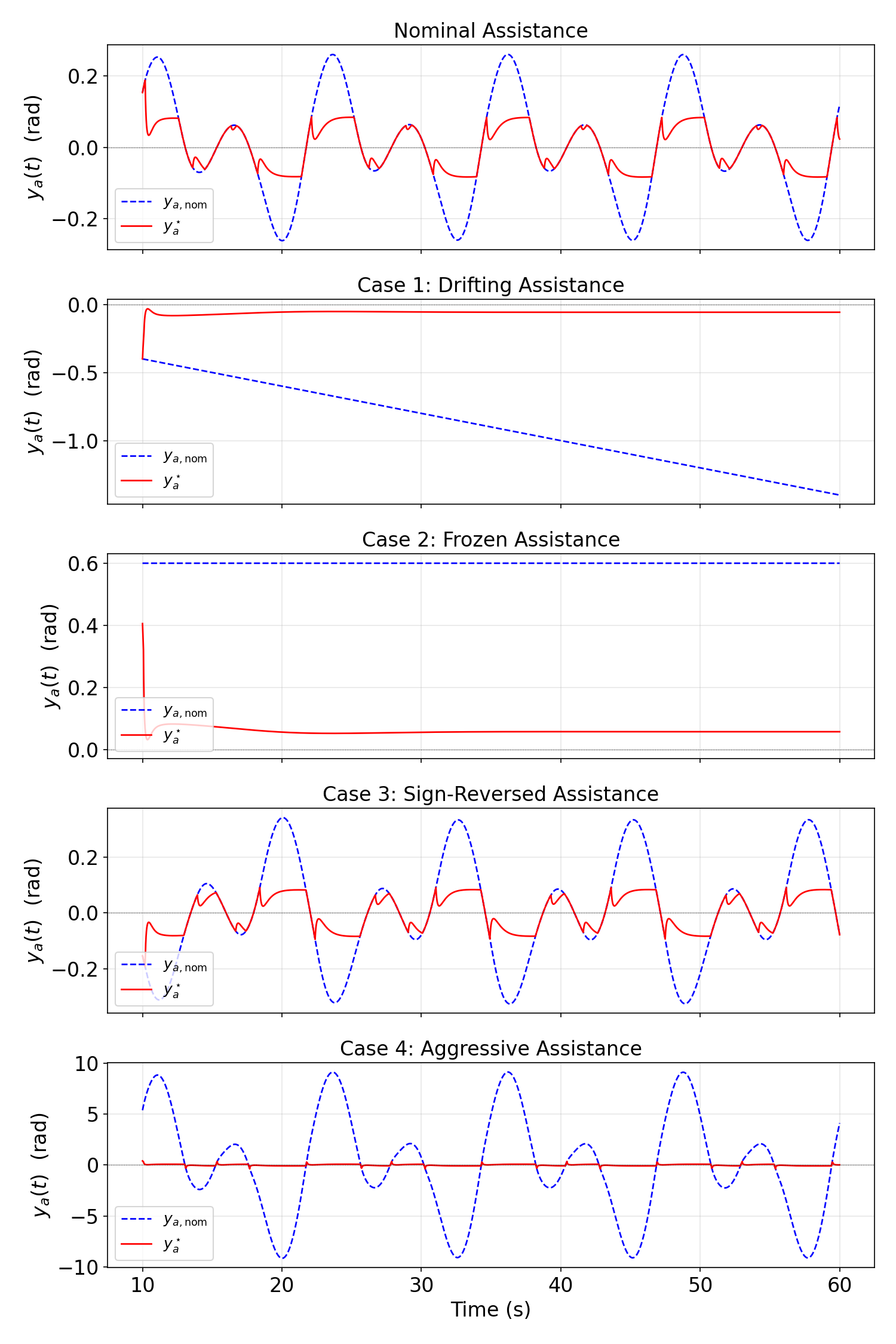}
	\caption{Nominal assistance $y_{a,\mathrm{nom}}(t)$ (blue dashed) and filtered assistance $y_a^\star(t)$ (red solid) (see~\eqref{eq:qp_full}) under nominal conditions and four fault scenarios (Cases~1--4). Both signals are from the same (filtered) simulation run, showing the actual QP input--output pair (see~\eqref{eq:qp_full}). Each row corresponds to one scenario (nominal, then Cases~1--4).}
	\label{fig:ya}
\end{figure}

Fig.~\ref{fig:ya} shows the filtered automation assistance across all scenarios. The workload barriers $b_1$ and $b_2$ in~\eqref{eq:b12} confine the workload between the envelope bounds $w_L$ and $w_U$, while the deviation barrier $b_3$ in~\eqref{eq:b3} is the active constraint most of the time, as the tight tube $\eta = 0.08$ makes confining the automation-induced deviation the primary demand on the safety filter. Under Cases~1 and~2, the safety filter in~\eqref{eq:qp_full} attenuates the large fault signals toward zero, while under Case~3 it moderates the sign-reversed assistance and under Case~4 it attenuates the effect of the destabilizing gain. In all cases, $b_3$ in~\eqref{eq:b3} drives $y_a^\star$ in~\eqref{eq:qp_full} toward zero, thereby maintaining $e^\tr \Wo\, e \le \eta^2$ (the tube constraint from~\eqref{eq:b3} on which the workload barriers $b_1$ and $b_2$ in~\eqref{eq:b12} depend).

\subsection{Quantitative Metrics}

Table~\ref{tab:metrics} summarises quantitative performance across all scenarios (nominal case and four fault scenarios) for $t \ge t_0$. The columns report the percentage change in root mean square (RMS) tracking error relative to the pilot-only baseline, RMS and peak workload deviation from the pilot-only baseline, and the percentage of time the safety filter is active.

\begin{table}[tb]
	\centering
	\caption{Performance metrics for all scenarios ($t \ge t_0 = 10$~s).}
	\label{tab:metrics}
	\begin{tabular*}{0.9\textwidth}{@{\extracolsep{\fill}}l c c c c@{}}
		\toprule
		Scenario & $\Delta\mathrm{RMS}(e_t)$ & $\mathrm{RMS}(\Delta w_p)$ & $\mathrm{Peak}(\Delta w_p)$ & Filter active \\
		& (\%)                       &                              &                               & (\%) \\
		\midrule
		P                    & 0.0613\,\text{rad} & ---    & ---    & --- \\
		P+A                  & $-$30.2 & 0.0294 & 0.0501 & --- \\
		P+A+S                & $-$12.9 & 0.0131 & 0.0230 & 57.7 \\
		\midrule
		P+C1                 & $+$13.8 & 0.0377 & 0.1394 & --- \\
		P+C1+S               & $+$0.9  & 0.0040 & 0.0187 & 100.0 \\
		\midrule
		P+C2                 & $+$7.5  & 0.0244 & 0.0849 & --- \\
		P+C2+S               & $-$0.5  & 0.0035 & 0.0162 & 100.0 \\
		\midrule
		P+C3                 & $+$61.9 & 0.0919 & 0.1756 & --- \\
		P+C3+S               & $+$11.1 & 0.0137 & 0.0250 & 67.1 \\
		\midrule
		P+C4                 & $+$76.7 & 0.9019 & 2.4339 & --- \\
		P+C4+S               & $-$13.3 & 0.0136 & 0.0232 & 97.2 \\
		\bottomrule
	\end{tabular*}

	\vspace{4pt}
	\parbox{0.9\textwidth}{\footnotesize P = pilot only, A = nominal assistance, S = safety filter, C1--C4 = fault cases. $\Delta\mathrm{RMS}(e_t)$: percent change relative to P baseline, where the P row reports the absolute baseline RMS in rad. $\Delta w_p \triangleq w_p(x,\rho) - w_p(\bar{x},\rho)$. Filter active = percentage of time $y_a^\star \ne y_{a,\mathrm{nom}}$, computed for $t \ge t_0 = 10$~s.}
\end{table}

Table~\ref{tab:metrics} shows that safety enforcement consistently reduces both RMS and peak workload deviation from the pilot-only baseline. The most dramatic result is Case~4 (aggressive assistance): the unfiltered $35\times$ gain destabilizes the closed loop and causes tracking error to grow by $77\%$, yet with safety enforcement the tracking error is reduced by $13\%$ relative to the pilot-only baseline. Without safety enforcement, the faults degrade tracking error by $8$--$77\%$ relative to the pilot-only baseline, yet with safety enforcement the tracking error stays within $14\%$ of the baseline across all cases. The safety filter is active $67$--$100\%$ of the time across fault scenarios.

\subsection{Fault Scenario Comparison}

\begin{figure}[tbp]
	\centering
	\includegraphics[width=0.68\textwidth]{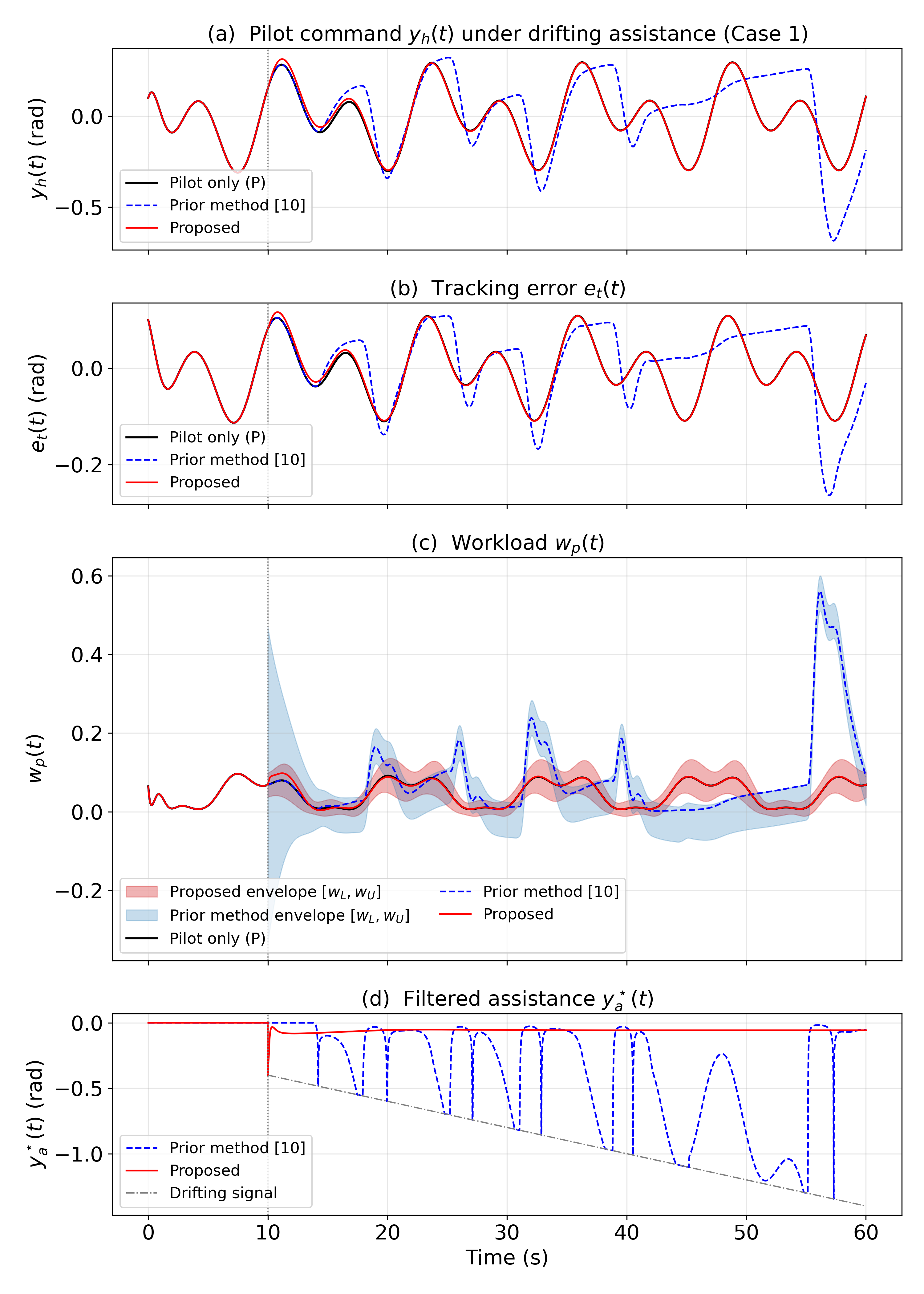}
	\caption{Comparison of the prior method~\cite{uzun2025arbitration} (blue dashed) and proposed pilot-anchored method (red solid) under drifting assistance (Case~1: $y_{a,\mathrm{nom}}(t) = -0.4 - 0.02(t - t_0)$~rad). Pilot-only baseline shown in black solid. (a)~Pilot command $y_h(t)$. (b)~Tracking error $e_t(t)$. (c)~Workload $w_p(t)$ with both envelopes. (d)~Filtered assistance $y_a^\star(t)$, gray dash-dot shows $y_{a,\mathrm{nom}}(t)$, the drifting fault signal.}
	\label{fig:comparison}
\end{figure}

Fig.~\ref{fig:comparison} compares the prior method~\cite{uzun2025arbitration} and the proposed method under drifting assistance (Case~1). The quadratic program (QP) constraint of~\cite{uzun2025arbitration} accommodates the growing command because its envelope dynamics are evaluated on the actual state, so the workload bounds shift progressively with the corrupted trajectory. The state-evaluated envelope evolves with the corrupted state, while the proposed pilot-anchored envelope, evaluated at the baseline state $\bar{x}$ in~\eqref{eq:predictor}, is unaffected. The deviation barrier $b_3$ given in~\eqref{eq:b3} drives the filtered assistance toward zero, maintaining ${|\Delta(t)| \le M(\bar{x},\rho)}$ (see~\eqref{eq:proposition1}).

\section{Conclusion}
\label{sec:conclusion}

This paper identified a structural vulnerability in shared-control arbitration. When the quantities that regulate automation authority are evaluated on the assisted trajectory (when the automation is active), unsafe assistance can corrupt the trajectory and drag the constraint boundaries with it. The proposed framework eliminates this coupling by anchoring the safety envelope to an assistance-independent baseline and confining the actual trajectory within it through a deviation barrier. Simulations on an F-16 pitch-tracking task under four severe unsafe assistance scenarios demonstrate that the proposed method prevents the workload envelope from being dragged by unsafe assistance.

\bibliographystyle{unsrt}
\bibliography{references}

\end{document}